\documentclass[journal]{IEEEtran}
\usepackage{amsmath,amssymb,mathtools,bm}
\usepackage{graphicx}
\usepackage{cite}
\usepackage{balance}
\usepackage{url}
\graphicspath{{simulations/figures/}}

\newtheorem{assumption}{Assumption}
\newtheorem{definition}{Definition}
\newtheorem{lemma}{Lemma}
\newtheorem{proposition}{Proposition}
\newtheorem{theorem}{Theorem}
\newtheorem{corollary}{Corollary}
\newtheorem{remark}{Remark}

\newcommand{\R}{\mathbb{R}}
\newcommand{\T}{^{\top}}
\newcommand{\norm}[1]{\lVert #1\rVert}
\newcommand{\diag}{\operatorname{diag}}
\newcommand{\He}{\operatorname{He}}

\begin{document}

\title{Resilient Control of Switched Vehicle Platoons under False Data Injection Attacks}

\author{Ali~Eslami and Jiangbo~Yu%
\thanks{Ali Eslami (ali.eslami@mcgill.ca) and Jiangbo Yu (jiangbo.yu@mcgill.ca) are with the Department of Civil Engineering,
McGill University, Montreal, QC, H3A 0C3, Canada. \textit{(Corresponding author: Jiangbo Yu.)}}}

\maketitle

\begin{abstract}
This paper investigates resilient control design for leader--follower vehicle platoons with mode-dependent powertrain dynamics subject to False Data Injection (FDI) attacks on vehicle-to-vehicle (V2V) communication channels. The longitudinal motion of each vehicle is described by a switched third-order model that captures changes in the powertrain dynamics across different operating modes. To estimate the attack signals that are injected into the communication channels, each vehicle is equipped with an auxiliary system, and a dedicated observer is implemented for each communication link. The resulting attack estimates are then used to mitigate the effects of the attacks through our proposed resilient controller. For attacks with bounded rates but potentially unbounded amplitudes, the closed-loop platoon is shown to be uniformly ultimately bounded. For a predecessor-following topology, string stability is established for the nominal switched platoon, while the effect of nonzero attack-estimation errors on acceleration propagation is shown to be bounded. Numerical case studies demonstrate the effectiveness of the proposed approach.
\end{abstract}

\begin{IEEEkeywords}
Vehicle platooning, false data injection, switched dynamics, auxiliary systems, string stability.
\end{IEEEkeywords}

\section{Introduction}
\label{sec:introduction}

\IEEEPARstart{C}{ooperative} vehicle platooning coordinates a group of connected
vehicles to maintain desired intervehicle spacing while possibly tracking the leader's
motion \cite{Rajamani12,Zheng16,yu2025agentic}. Such coordinated operation can improve road
capacity and traffic efficiency while reducing energy consumption and
emissions \cite{Petrillo18,Mousavinejad22}. To achieve this coordination,
vehicles exchange motion information, such as position, velocity, and
acceleration, through vehicle-to-vehicle (V2V) communication links
\cite{Zheng16,Mousavinejad22}. This reliance on wireless communication,
however, exposes the platoon to cyberattacks that can corrupt the exchanged
information, degrade platoon performance, and potentially compromise safe
operation \cite{Petrillo18,Mousavinejad22,eslami2025security,Ploeg14,eslami2026stable}.

From a control perspective, two fundamental requirements are internal
stability, which concerns convergence of the closed-loop platoon states for a
fixed platoon size, and string stability, which prevents disturbances from
being amplified as they propagate downstream along the vehicle chain
\cite{Swaroop96,Ploeg14,Zheng16}. For predecessor-following architectures, a
constant-time-headway (CTH) spacing policy is widely used to ensure string
stability when the headway and controller parameters satisfy appropriate
conditions \cite{Ioannou93,Naus10}. Classical CTH and string-stability
analyses, however, predominantly consider vehicles with fixed longitudinal
dynamics.

In practical operation, a vehicle's longitudinal dynamics may vary with gear
selection, torque source, propulsion mode, or actuation regime. These
operating-regime changes can be represented by mode-dependent powertrain
dynamics, resulting in a vehicle model that switches among distinct
longitudinal modes. Stability of every frozen mode does not, in general,
guarantee stability of the resulting switched closed loop
\cite{Hespanha99,Liberzon03}. Average-dwell-time and
multiple-Lyapunov-function techniques provide standard tools for analyzing
such systems when the modes do not share a common Lyapunov function
\cite{Hespanha99,Liberzon03}.

Related platooning studies have considered switching primarily at the
communication layer. Wang et al.\ developed a model-predictive controller for
reconfigurable communication topologies, whereas Ding et al.\ considered
distributed adaptive platooning under Markov switching graphs
\cite{wang2022mpc,ding2024distributed}. In contrast, the switching considered
here occurs in the physical vehicle dynamics through mode-dependent powertrain
lags.

Separately, the dependence of cooperative platooning on V2V communication
creates vulnerability to false-data-injection (FDI) attacks. The construction
and effects of attacks on cooperative driving have been studied in
\cite{eslami2025resource,Teixeira15}. The authors in \cite{Biron18} developed real-time
denial-of-service attack detection and estimation, while \cite{Mousavinejad20}
proposed distributed attack detection and recovery. Secure adaptive and observer-based control under
communication delays and FDI attacks was investigated in
\cite{Petrillo21,Wang25}. The authors in \cite{Guo24} designed a distributed finite-time
observer for joint vehicle-state and input-channel attack estimation, whereas
\cite{Li26,Song26} considered switching-topology and
hybrid-stochastic attack models, respectively. More
recently, \cite{gong2026toward} proposed a two-layer architecture against coupled FDI
and denial-of-service attacks, actuator faults, and external disturbances. Redundancy-based resilient-control and secure-estimation
methods, however, require sufficiently many uncompromised information sources
or impose attack-sparsity conditions \cite{Safeguard24,Wei24}.

These research directions reveal an important gap. Existing observer-based
and resilient-control methods generally consider fixed physical dynamics and
prescribed, stochastic, or finite-energy attack models, while classical CTH
analyses predominantly assume fixed, attack-free vehicle dynamics. The joint
problem of guaranteeing platoon stability under switching powertrain dynamics
and bounded-rate FDI attacks with potentially unbounded amplitudes, when every
controller-relevant V2V link may be corrupted, therefore remains
insufficiently addressed.

This paper addresses these physical and cyber challenges simultaneously. Each
vehicle is modeled as a switched third-order system whose effective powertrain
lag changes across operating modes, and every V2V communication channel may be
subject to bounded-rate FDI attacks with possibly unbounded amplitudes. To
estimate and compensate for the corrupted information, each sender runs a
defender-designed switching auxiliary system, and each receiver implements a
dedicated observer for every communication link. The resulting design renders
the compensated platoon uniformly ultimately bounded and, for a
predecessor-following topology, nominally switched string stable. It also
provides a certified bound on acceleration propagation under nonzero
attack-estimation errors.

Therefore, the main contributions of this paper are summarized as follows:
\begin{enumerate}
\item An augmented observer is developed to estimate the FDI signals added
to the communicated vehicle states and auxiliary outputs on each
communication link. The estimation error depends on the rate of change of
the attacks rather than their amplitudes. The method does not require a
bound on the attack amplitudes or on the number of attacked links.

\item A resilient control protocol is proposed to achieve uniformly ultimately
bounded results for the platoon subject to bounded-rate FDI attacks with potentially unbounded amplitudes.

\item For the predecessor-following topology, string stability is established for the nominal switched platoon. A bound is also derived for acceleration propagation when the attack-estimation errors are nonzero, under potentially unbounded attack amplitudes with bounded rates.

\end{enumerate}
The remainder of the paper is organized as follows.
Section~\ref{sec:formulation} formulates the communication graph, switched
vehicle dynamics, and attack and auxiliary channels.
Section~\ref{sec:estimation} establishes augmented-system observability and
derives an amplitude-independent attack-estimation bound. Section~\ref{sec:stability} analyzes the stability of the platoon. Section~\ref{sec:sim} presents the simulation case scenarios, and Section~\ref{sec:conclusion} concludes the paper.

\emph{Notation:} $\norm{\cdot}$ denotes the Euclidean norm for vectors and the
corresponding induced norm for matrices. For a signal $f$ and any $T>0$, its
finite-horizon $\mathcal{L}_2$ norm is defined by
\begin{align*}
\norm{f}_{2,T}
:=
\left(\int_{0}^{T}\norm{f(t)}^{2}\,dt\right)^{1/2}
\end{align*}
The power seminorm is defined as
\begin{equation}
\label{eq:powernorm}
\norm{f}_{\mathcal{P}}
:=
\left(
\limsup_{T\to\infty}
\frac{1}{T}
\int_{0}^{T}\norm{f(t)}^{2}\,dt
\right)^{1/2}
\end{equation}
The time argument of a time-varying signal is omitted when it is clear from
the context. It is shown explicitly when signals are evaluated at different
time instants or when their temporal dependence must be emphasized.

\section{Problem Formulation}\label{sec:formulation}
This section specifies the communication graph, the switched vehicle dynamics, the defender-designed auxiliary system, and the cyber-attack signals on the communication channels.
\subsection{Communication Graph}
There is one leader, indexed by $0$, and $N$ followers. Let
$M=[m_{ij}]$ be the follower adjacency matrix, where $m_{ij}=1$ when follower
$i$ receives vehicle $j$'s packet. Let $L$ be its Laplacian matrix and
$P=\diag(p_1,\ldots,p_N)$, where $p_i=1$ if follower $i$ receives information directly from the leader.
Define $\mathcal N_i:=\{j:m_{ij}=1\}$ and let
$\mathcal P_i=\{0\}$ if $p_i=1$ and $\mathcal P_i=\emptyset$ otherwise. Then
\begin{equation*}
 \mathcal I_i:=\mathcal N_i\cup\mathcal P_i,\qquad
 \mathcal E_c:=\{(i,j):j\in\mathcal I_i\}.
\end{equation*}

The communication topology is required to satisfy the following connectivity
condition.
\begin{assumption}\label{as:graph}
The graph is fixed and look-ahead, and its leader-augmented graph contains a
spanning tree rooted at vehicle $0$. Thus $L+P$ is nonsingular and, for the
look-ahead ordering, has real positive eigenvalues
$\lambda_1,\ldots,\lambda_N$.
\end{assumption}

\subsection{Switched Vehicle Dynamics}
\label{subsec:Switched Vehicle dynamics}
For each vehicle $i\in\{0,\ldots,N\}$, consider the following switched
dynamics:
\begin{equation}\label{eq:vehicle}
 \begin{aligned}
 \dot x_i(t)&=A_{\sigma(t)}x_i(t)+B_{\sigma(t)}u_i(t),\\
 x_i(t)&=\begin{bmatrix}s_i(t)&v_i(t)&a_i(t)\end{bmatrix}\T.
 \end{aligned}
\end{equation}
Here, $\sigma(t)$ denotes the active physical mode at time $t$. Define the
physical mode set $\mathcal S:=\{1,\ldots,n_s\}$. For each arbitrary mode
$p\in\mathcal S$, let $\tau_p>0$ be the mode-dependent powertrain lag, and define
\begin{equation*}
 A_p=\begin{bmatrix}0&1&0\\0&0&1\\0&0&-1/\tau_p\end{bmatrix},
 \qquad B_p=\begin{bmatrix}0\\0\\1/\tau_p\end{bmatrix}.
\end{equation*}
The physical switching signal $\sigma:[0,\infty)\to\mathcal S$ is
right-continuous, piecewise constant, and common to the fleet; thus
$\sigma(t)$ denotes the active physical mode at time $t$, whereas $p$ denotes
an arbitrary element of $\mathcal S$.

\begin{remark}
The switched model captures changes in the effective longitudinal response
caused by gear shifts, propulsion modes, or actuation regimes. It can be
obtained from a nonlinear force-balance model after compensation of the
nominal resistance terms, yielding
$\tau_p\dot a_i+a_i=u_i$ in mode $p$ \cite{eslami2025resource}.
\end{remark}

\begin{remark}
As a first step, this work assumes fleet-synchronous physical switching and instantaneous synchronization between the active powertrain mode and the corresponding controller gain. A natural extension is to consider asynchronous switching, in which the plant and controller switching signals may temporarily differ. In that setting, suitable multiple Lyapunov functions can be used to characterize state contraction during matched intervals and possible growth during mismatched intervals. By imposing an upper bound on the mismatch duration or mismatch-time ratio, together with appropriate dwell-time conditions, one may ensure that the decay accumulated during matched operation dominates the mismatch-induced growth. Extending the proposed resilient-control and string-stability analysis to this asynchronous setting is left for future work.
\end{remark}

\begin{assumption}\label{as:physical-switching}
The physical switching signal has average dwell time (ADT) $\tau_a>0$ and
chatter bound $N_0\ge0$, namely
\begin{equation}\label{eq:adt}
 N_\sigma(t,s)\le N_0+\frac{t-s}{\tau_a},\qquad t\ge s\ge0.
\end{equation}
\end{assumption}

Introduce the common moving reference
\begin{equation}\label{eq:common-frame}
 \begin{aligned}
 r_c(t)&=\begin{bmatrix}s_c+v_ct&v_c&0\end{bmatrix}\T,\\
 \dot r_c(t)&=\begin{bmatrix}v_c&0&0\end{bmatrix}\T,
 \end{aligned}
\end{equation}
where $s_c$ and $v_c$ are pre-agreed protocol parameters known to all
legitimate vehicles. It follows that for every
$p\in\mathcal S$,
\begin{equation*}
 A_pr_c(t)
 =
 \begin{bmatrix}v_c&0&0\end{bmatrix}\T
 =\dot r_c(t).
\end{equation*}
Consequently, the active physical mode also satisfies
$A_{\sigma(t)}r_c(t)=\dot r_c(t)$. Define the moving-frame state
$\bar x_j(t):=x_j(t)-r_c(t)$. Its dynamics follow explicitly as
\begin{equation}\label{eq:moving-frame-dynamics}
 \begin{aligned}
 \dot{\bar x}_j(t)
 &=\dot x_j(t)-\dot r_c(t)\\
 &=A_{\sigma(t)}x_j(t)+B_{\sigma(t)}u_j(t)
   -A_{\sigma(t)}r_c(t)\\
 &=A_{\sigma(t)}\bigl(x_j(t)-r_c(t)\bigr)
   +B_{\sigma(t)}u_j(t)\\
 &=A_{\sigma(t)}\bar x_j(t)+B_{\sigma(t)}u_j(t).
 \end{aligned}
\end{equation}
Moreover,
\begin{align}\label{eq:moving-frame-bound}
 \bar x_j(t)&=
 \begin{bmatrix}s_j(t)-s_c-v_ct&v_j(t)-v_c&a_j(t)\end{bmatrix}\T,\nonumber\\
 s_j(t)-s_c-v_ct
 &=s_j(t_0)-s_c-v_ct_0+\int_{t_0}^{t}(v_j(\tau)-v_c)\,d\tau.
\end{align}
The arguments $t$, $t_0$, and $\tau$ are displayed in
\eqref{eq:moving-frame-bound} for temporal clarity.
Thus $\bar x_j$ is bounded if
$v_j-v_c\in\mathcal L_1\cap\mathcal L_\infty$ and $a_j$ is bounded. This fact will be used in the proposed auxiliary system presented in the next section.

\subsection{Defender-Designed Auxiliary System}
Each sender $j$ runs a switching auxiliary system with state $z_j\in\R^{n_z}$ and
output $y_j^z\in\R^{m_z}$:
\begin{equation}\label{eq:auxiliary}
 \dot z_j(t)=A_{z,\rho(t)}z_j(t)+B_{z,\rho(t)}\bar x_j(t),
 \qquad y_j^z(t)=C_{z,\rho(t)}z_j(t).
\end{equation}
Here, $\rho(t)$ denotes the active auxiliary mode at time $t$. Define the
auxiliary mode set $\mathcal R:=\{1,\ldots,n_r\}$ and the defender-designed
auxiliary switching signal $\rho:[0,\infty)\to\mathcal R$, which is
independent of the physical switching signal $\sigma$ in (\ref{eq:vehicle}). Thus, $r$ denotes an
arbitrary element of $\mathcal R$, whereas $\rho(t)$ denotes the active
element. For a family of mode-indexed matrices
$\{M_r\}_{r\in\mathcal R}$, $M_{\rho(t)}$ denotes the matrix associated with
the active auxiliary mode. When the time dependence is clear, we use the
shorthand $M_\rho:=M_{\rho(t)}$.

For each $r\in\mathcal R$, the defender designs
$A_{z,r}\in\R^{n_z\times n_z}$,
$B_{z,r}\in\R^{n_z\times3}$, and
$C_{z,r}\in\R^{m_z\times n_z}$, as well as the switching schedule $\rho$.
The schedule uses a prescribed mode cycle $(r_1,\ldots,r_m)$, where
$r_\ell\in\mathcal R$ and $m$ is the number of mode dwells in one cycle. For
each $\ell\in\{1,\ldots,m\}$, the defender specifies a finite dwell-time set
\[
 \mathcal D_\ell\subset
 [\underline\Delta_\ell,\overline\Delta_\ell],
 \qquad
 0<\underline\Delta_\ell\le\overline\Delta_\ell<\infty.
\]
In cycle $k$, $\Delta_{\ell,k}\in\mathcal D_\ell$ denotes the dwell time of
mode $r_\ell$. The maximum cycle duration is
\begin{equation}\label{eq:cycle-time}
 T_\rho:=\sum_{\ell=1}^m\overline\Delta_\ell.
\end{equation}
These defender-selected constants specify the admissible auxiliary schedules
and are used below to select a uniform observability window and quantify the
observer decay rate.

\subsection{Communication and Attack Model}
For each communication link $(i,j)\in\mathcal E_c$, vehicle $j$ transmits its
physical state in the original coordinates and its auxiliary output. Receiver
$i$ obtains the corrupted physical-state packet
\begin{equation}\label{eq:attacked-packets}
 x_{ij}^*=x_j+x^a_{ij},
\end{equation}
and the corrupted auxiliary-output packet
\begin{equation}\label{eq:attacked-aux-packet}
 y_{ij}^{z*}=C_{z,\rho(t)}z_j+y^a_{ij}.
\end{equation}
Here, $x^a_{ij}\in\R^3$ is the additive attack on the physical-state packet,
and $y^a_{ij}\in\R^{m_z}$ is the additive attack on the auxiliary-output
packet. Therefore, the adversary may corrupt both the
ordinary physical-state channel and the auxiliary-output channel. Since the physical state is transmitted in the original
coordinates, receiver $i$ constructs the moving-frame input
\begin{equation}\label{eq:normalized-packet}
 \bar x_{ij}^*:=x_{ij}^*-r_c(t)=\bar x_j+x^a_{ij}
\end{equation}
Note that as stated in Section \ref{subsec:Switched Vehicle dynamics}, $r_c$ is known to all the vehicles.

The admissible attack signals are assumed to satisfy the following conditions.
\begin{assumption}\label{as:attack}
For every $(i,j)\in\mathcal E_c$, $x^a_{ij}$ and $y^a_{ij}$ have finite
initial values and are locally absolutely continuous. Furthermore, the attack rates satisfy for an unknown finite
$\bar d_{ij}$, the following condition:
\begin{equation}\label{eq:attack-rate}
 \left\|\begin{bmatrix}\dot y^a_{ij}(t)\\\dot x^a_{ij}(t)\end{bmatrix}\right\|
 \le\bar d_{ij}\quad\text{for almost every }t.
\end{equation}
\begin{remark}
    Note that in Assumption \ref{as:attack}, no bounds are imposed on the attack-signal amplitudes.
\end{remark}
\end{assumption}

\begin{assumption}\label{as:aux-schedule}
The sender and receiver use the same switching signal $\rho$. It repeats the prescribed
mode cycle $(r_1,\ldots,r_m)$, and in every cycle $k$, mode $r_\ell$ is active
for a dwell time $\Delta_{\ell,k}$ selected from the corresponding pre-agreed
set $\mathcal D_\ell$.
\end{assumption}

\begin{remark}
A pre-shared seed and synchronized clock allow the sender and receiver to
reproduce $\rho$, so no secure online channel is required.
\end{remark}

\section{Attack Estimation}\label{sec:estimation}
Consider an arbitrary communication link $(i,j)\in\mathcal E_c$. For
notational simplicity, the link indices $i$ and $j$ are omitted throughout
the remainder of this section. Define
\begin{equation}\label{eq:aug-state}
 \chi=\begin{bmatrix}z_j\\y^a\\x^a\end{bmatrix},\quad
 \nu=\begin{bmatrix}\dot y^a\\\dot x^a\end{bmatrix},
\end{equation}
and, for $r\in\mathcal R$,
\begin{align}\label{eq:aug-matrices}
 \mathcal A_r&=
 \begin{bmatrix}A_{z,r}&0&-B_{z,r}\\0&0&0\\0&0&0\end{bmatrix},
 &\mathcal B_r&=\begin{bmatrix}B_{z,r}\\0\\0\end{bmatrix},\nonumber\\
 \mathcal C_r&=\begin{bmatrix}C_{z,r}&I&0\end{bmatrix},
 &D&=\begin{bmatrix}0&0\\I&0\\0&I\end{bmatrix}.
\end{align}
Therefore, we have the following augmented dynamics:
\begin{equation}\label{eq:aug-system}
 \dot\chi=\mathcal A_\rho\chi+\mathcal B_\rho\bar x^*+D\nu,
 \qquad y^{z*}=\mathcal C_\rho\chi.
\end{equation}

Let $\Phi_{\mathcal A}(t,s)$ be the transition matrix of the homogeneous
system generated by $\mathcal A_{\rho(t)}$.

We first define the uniform observability property required for reconstructing
the augmented state under auxiliary switching.
\begin{definition}\label{def:uco}
The pair $(\mathcal A_\rho,\mathcal C_\rho)$ is uniformly switching observable
if there are $T_o>0$ and $\beta_o>0$ such that, for every admissible auxiliary
schedule and every $t\ge0$,
\begin{equation}\label{eq:gramian}
 W_o(t,T_o):=\int_t^{t+T_o}\!\Phi_{\mathcal A}(s,t)\T
 \mathcal C_{\rho(s)}\T\mathcal C_{\rho(s)}
 \Phi_{\mathcal A}(s,t)\,ds\succeq\beta_o I.
\end{equation}
\end{definition}

The receiver implements the following observer
\begin{equation}\label{eq:observer}
 \dot{\hat\chi}=\mathcal A_\rho\hat\chi+\mathcal B_\rho\bar x^*
 +L_\rho(y^{z*}-\mathcal C_\rho\hat\chi),
\end{equation}
where
$\hat\chi=\operatorname{col}(\hat z,\hat y^a,\hat x^a)$. Let
$\mathcal A_{e,r}=\mathcal A_r-L_r\mathcal C_r$.

\begin{assumption}\label{as:lifted}
There exist $P_o\succ0$ and $q\in(0,1)$ such that, for every admissible dwell
tuple in every cycle,
\begin{align}\label{eq:monodromy}
 \Psi_{e,k}\T P_o\Psi_{e,k}&\preceq q^2P_o,\nonumber\\
 \Psi_{e,k}&:=
 \exp(\mathcal A_{e,r_m}\Delta_{m,k})\cdots\nonumber\\
 &\hspace{1.7em}\exp(\mathcal A_{e,r_1}\Delta_{1,k}).
\end{align}
The product is time ordered, with the earliest factor on the right.
\end{assumption}

Definition~\ref{def:uco} and Assumption~\ref{as:lifted} have different roles.
Uniform switching observability gives structural identifiability of the
augmented state. Assumption~\ref{as:lifted} is a separate sufficient condition
for the selected mode-dependent gains $L_r$. At cycle boundaries, \eqref{eq:monodromy} is the
standard discrete-time quadratic Lyapunov contraction. Once the gains and $q$ are fixed, the finite dwell
alphabet reduces its verification to a finite family of Linear Matrix Inequalities (LMIs) in $P_o$ and the
factor $q$ can be selected by a scalar search.

The identifiability property in Definition~\ref{def:uco} need not be left to an
a~posteriori Gramian check: it follows from an explicit rank condition on a mode
pair, which also yields a constructive auxiliary design.

Note that the conditions in Definition~\ref{def:uco} and the subsequent
proposition are imposed on the defender-designed auxiliary system rather than
on the physical vehicle dynamics. Since
$A_{z,r}$, $B_{z,r}$, and $C_{z,r}$ are design variables selected by the
defender, the Hurwitz, observability, and DC-gain rank conditions can be
enforced constructively and do not restrict the physical matrices $A_p$ and
$B_p$.

The following proposition provides constructive conditions on the
defender-designed auxiliary system under which the augmented state, including
the injected attack signals, is uniformly identifiable over each complete
switching cycle.

\begin{proposition}\label{prop:pbh}
Suppose every $A_{z,r}$ is Hurwitz, every auxiliary pair
$(A_{z,r},C_{z,r})$ is observable, and define the auxiliary DC-gain matrices
\begin{equation}\label{eq:dcgain}
 G_r:=C_{z,r}A_{z,r}^{-1}B_{z,r}\in\R^{m_z\times3},\qquad r\in\mathcal R.
\end{equation}
If the cycle contains two modes $r,r'$ for which
$\operatorname{rank}(G_r-G_{r'})=3$, and each of those modes has positive dwell
time, then the observability map over every complete cycle is injective.
Moreover, for the finite dwell alphabet, the property in
Definition~\ref{def:uco} holds over a window containing a complete cycle, with
$T_o=2T_\rho$ and a schedule-independent constant $\beta_o>0$.
\end{proposition}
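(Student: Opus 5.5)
The plan is to show injectivity first, by a mode-by-mode argument on the homogeneous augmented dynamics. Then I would upgrade injectivity to a uniform Gramian bound using compactness of the finite dwell alphabet.

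\emph{Step 1: injectivity over a cycle.} Observability concerns the homogeneous part of \eqref{eq:aug-system}, so I take $\bar x^*\equiv0$ and $\nu\equiv0$. Then $y^a$ and $x^a$ are constant, and $\dot z=A_{z,\rho}z-B_{z,\rho}x^a$ with output $y^{z*}=C_{z,\rho}z+y^a$. Suppose the output vanishes identically over a complete cycle. During a dwell interval of mode $r$ (starting at $t_1$), Hurwitz implies $A_{z,r}$ is invertible, which gives
\begin{equation*}
z(t)=e^{A_{z,r}(t-t_1)}w_r+A_{z,r}^{-1}B_{z,r}x^a,\qquad w_r:=z(t_1)-A_{z,r}^{-1}B_{z,r}x^a .
\end{equation*}
The output is therefore $C_{z,r}e^{A_{z,r}(t-t_1)}w_r+G_rx^a+y^a\equiv0$ on an interval of positive length. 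Differentiating gives $C_{z,r}e^{A_{z,r}(t-t_1)}A_{z,r}w_r\equiv0$. Observability of $(A_{z,r},C_{z,r})$ then yields $A_{z,r}w_r=0$, and hence $w_r=0$. Consequently $z\equiv A_{z,r}^{-1}B_{z,r}x^a$ on that interval and $y^a=-G_rx^a$.

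The same reasoning applies in every mode of the cycle, in particular in both $r$ and $r'$ (each has positive dwell). This gives $y^a=-G_rx^a=-G_{r'}x^a$, so $(G_r-G_{r'})x^a=0$. The rank condition forces $x^a=0$, hence $y^a=0$. Then $z=A_{z,r}^{-1}B_{z,r}x^a=0$ on the mode-$r$ interval. Since $\Phi_{\mathcal A}$ is invertible, the augmented state is zero at any time in the cycle, which proves injectivity.

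\emph{Step 2: cycle Gramian.} Injectivity means that the Gramian over any complete cycle starting at a cycle boundary $t_c$, $W_c:=W_o(t_c,\sum_\ell\Delta_{\ell,k})$, is positive definite for every admissible dwell tuple. The dwell sets $\mathcal D_\ell$ are finite, so there are finitely many such tuples. Moreover $W_c$ depends only on the tuple, not on $t_c$, because the system is piecewise time-invariant. Hence $\beta_c:=\min\lambda_{\min}(W_c)>0$, with the minimum taken over the finite family of tuples.

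\emph{Step 3: uniformity (main obstacle).} For arbitrary $t$, the next cycle boundary $t_c$ satisfies $t_c-t\le T_\rho$. The following cycle ends by $t_c+T_\rho\le t+2T_\rho$, so the window $[t,t+2T_\rho]$ contains a complete cycle. Since the Gramian integrand is positive semidefinite, the Gramian is monotone in the window, which gives
\begin{equation*}
W_o(t,2T_\rho)\succeq\Phi_{\mathcal A}(t_c,t)\T W_c\,\Phi_{\mathcal A}(t_c,t)\succeq\frac{\beta_c}{\norm{\Phi_{\mathcal A}(t_c,t)^{-1}}^2}\,I .
\end{equation*}
The difficulty is bounding $\norm{\Phi_{\mathcal A}(t_c,t)^{-1}}=\norm{\Phi_{\mathcal A}(t,t_c)}$ uniformly over the phase $t$. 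This transition matrix is a product of at most $m$ matrix exponentials $e^{-\mathcal A_{r_\ell}s_\ell}$ with $s_\ell\in[0,\overline\Delta_\ell]$, so its norm is bounded by $\exp(T_\rho\max_r\norm{\mathcal A_r})$. This yields a schedule-independent constant
\begin{equation*}
\beta_o:=\beta_c\exp\!\Bigl(-2T_\rho\max_{r}\norm{\mathcal A_r}\Bigr)>0,
\end{equation*}
which gives Definition~\ref{def:uco} with $T_o=2T_\rho$. The remaining care is in aligning $t_c$ with the cycle start $r_1$ and handling a partial first mode, but neither affects the bound above.
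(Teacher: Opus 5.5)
Your proof is correct and follows the paper's argument step for step: the shifted state $w=z-A_{z,r}^{-1}B_{z,r}x^a$ on each mode interval, the rank condition on $G_r-G_{r'}$, the minimum of the cycle Gramians over the finite dwell alphabet, and the $e^{-2T_\rho\max_r\norm{\mathcal A_r}}$ transition bound that yields $\beta_o$. The one local difference is how you eliminate $w_r$: you differentiate and use observability together with nonsingularity of $A_{z,r}$, whereas the paper analytically continues the identity and lets $\theta\to\infty$ using the Hurwitz property, so your variant shows that injectivity only needs $A_{z,r}$ to be invertible.
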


\begin{IEEEproof}
A direction is unobservable over a complete switching cycle only if it produces
zero output on every visited mode interval. For the homogeneous augmented
system used in the observability analysis, the exogenous attack-derivative
input is set to zero. Hence, $\dot{x}^{a}=0$ and $\dot{y}^{a}=0$, so
$x^{a}$ and $y^{a}$ remain constant along the homogeneous trajectory. During
mode $r$, a zero-output trajectory therefore satisfies
\begin{align*}
    \dot{z}=A_{z,r}z-B_{z,r}x^{a},
\qquad
0=C_{z,r}z+y^{a}.
\end{align*}
 Define
$w=z-A_{z,r}^{-1}B_{z,r}x^a$. Then
\[
 \dot w=A_{z,r}w,
 \qquad 0=C_{z,r}w+G_rx^a+y^a.
\]
Using elapsed time $\theta$ from the beginning of the mode-$r$ interval, zero
output gives
\[
 C_{z,r}e^{A_{z,r}\theta}w(0)=-(G_rx^a+y^a)
\]
on a nontrivial interval. Both sides are real analytic in $\theta$, so this
identity extends to every $\theta\ge0$. Since $A_{z,r}$ is Hurwitz, the
left-hand side tends to zero as $\theta\to\infty$; hence
$G_rx^a+y^a=0$. It follows that
$C_{z,r}e^{A_{z,r}\theta}w(0)\equiv0$, and observability of
$(A_{z,r},C_{z,r})$ gives $w(0)=0$.
Repeating the argument in mode $r'$ gives
$G_{r'}x^a+y^a=0$. Hence $(G_r-G_{r'})x^a=0$, and full column rank yields
$x^a=0$, $y^a=0$, and $z=0$. Thus the complete-cycle observability map is
injective.

For an admissible dwell tuple $d$, let $W_c(d)$ denote the observability
Gramian from one cycle boundary to the next. The preceding injectivity gives
$W_c(d)\succ0$. Since $\mathcal D_1\times\cdots\times\mathcal D_m$ is finite,
\[
 \beta_c:=\min_d\lambda_{\min}(W_c(d))>0.
\]
For an arbitrary $t$, let $b\ge t$ be the next cycle boundary. Then
$b-t\le T_\rho$, and the complete cycle starting at $b$ ends no later than
$t+2T_\rho$. With $a:=\max_{r\in\mathcal R}\norm{\mathcal A_r}$, the standard
transition bound gives
$\sigma_{\min}(\Phi_{\mathcal A}(b,t))\ge e^{-aT_\rho}$. The contribution of
that complete cycle to the Gramian therefore satisfies
\[
 W_o(t,2T_\rho)\succeq
 \Phi_{\mathcal A}(b,t)\T W_c(d)\Phi_{\mathcal A}(b,t)
 \succeq e^{-2aT_\rho}\beta_c I.
\]
Thus the uniform switching observability property in
Definition~\ref{def:uco} holds over the initial phase, the dwell tuple, and
changes of dwell tuple between successive cycles, with
$\beta_o=e^{-2aT_\rho}\beta_c$. This completes the proof of the proposition.
\end{IEEEproof}

A simple auxiliary-system construction satisfying the conditions of
Proposition~\ref{prop:pbh} is given below.
\begin{enumerate}
 \item Set $n_z=m_z=3$ and $C_{z,r}=I_3$.
 \item Select $\alpha_r>0$ for every auxiliary mode and choose at least two
 modes with distinct scalars $g_r$.
 \item Set
 \begin{equation}\label{eq:explicit-aux}
  A_{z,r}=-\alpha_rI_3,\qquad
  B_{z,r}=\alpha_rg_rI_3.
 \end{equation}
 \item Choose an auxiliary switching cycle that visits at least two modes with
 distinct $g_r$ and positive dwell times.
\end{enumerate}

To verify that this construction satisfies Proposition~\ref{prop:pbh}, note
first that each $A_{z,r}$ is Hurwitz and each pair
$(A_{z,r},C_{z,r})$ is observable. In this construction,
$z,y^a,x^a\in\R^3$ and $\chi\in\R^9$. Moreover,
\[
 G_r=C_{z,r}A_{z,r}^{-1}B_{z,r}=-g_rI_3,
 \qquad
 G_r-G_{r'}=(g_{r'}-g_r)I_3.
\]
Distinct $g_r$ and $g_{r'}$ therefore give
$\operatorname{rank}(G_r-G_{r'})=3$, so the conditions of
Proposition~\ref{prop:pbh} are satisfied and the property in
Definition~\ref{def:uco} follows.

Because the single-mode augmented pairs $(\mathcal A_r,\mathcal C_r)$ are not
observable, the gains $L_r$ are obtained by lifted periodic-observer synthesis
on the monodromy $\Psi_{e,k}$, with cross-mode observability supplied by
Proposition~\ref{prop:pbh}. For the gains used in Section~\ref{sec:sim}, we
minimize the contraction factor $q$ over $(L_r,P_o,q)$ and then verify
\eqref{eq:monodromy} for all nine admissible dwell tuples generated by
$\mathcal D_\ell=\{0.30,0.35,0.40\}$~s. The resulting monodromy spectral radii
lie in $[0.53,0.79]$, yielding $q=0.79$ and
$\lambda_e=0.29~\mathrm{s}^{-1}$; hence, Assumption~\ref{as:lifted} holds over
the complete admissible dwell family.

The same construction also ensures bounded auxiliary states for bounded
moving-frame inputs. Define
$\alpha:=\min_{r\in\mathcal R}\alpha_r>0$ and
$\bar B_z:=\max_{r\in\mathcal R}\norm{B_{z,r}}$. Then
\begin{equation}\label{eq:auxiliary-bound}
 \norm{z_j(t)}
 \le e^{-\alpha(t-t_0)}\norm{z_j(t_0)}
 +\frac{\bar B_z}{\alpha}
 \sup_{\tau\in[t_0,t]}\norm{\bar x_j(\tau)}.
\end{equation}
Thus the moving-frame input and auxiliary state are bounded under the
conditions following \eqref{eq:moving-frame-bound}.

The next lemma converts the cycle-wise contraction certificate into a
continuous-time exponential bound on the observer transition matrix.
\begin{lemma}\label{lem:observer-decay}
Under Assumption~\ref{as:lifted}, the transition matrix $\Phi_e(t,s)$ of
$\dot\eta=\mathcal A_{e,\rho(t)}\eta$ satisfies
\begin{equation}\label{eq:ues}
 \norm{\Phi_e(t,s)}\le M_e e^{-\lambda_e(t-s)},
 \qquad \lambda_e=-\frac{\ln q}{T_\rho}>0.
\end{equation}
with one admissible choice being
\[
 M_e=K_e^2
 \sqrt{\frac{\lambda_{\max}(P_o)}{\lambda_{\min}(P_o)}}q^{-2},
\]
where $K_e\ge1$ uniformly bounds, in the induced Euclidean norm, every
transition over a subinterval contained within one auxiliary cycle.
\end{lemma}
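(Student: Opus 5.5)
The argument combines the cycle-wise contraction of Assumption~\ref{as:lifted} with a uniform bound on within-cycle transitions. Let $b_0<b_1<\cdots$ denote the cycle boundaries of $\rho$. Since every $\mathcal D_\ell$ is finite, each cycle length satisfies $b_{k+1}-b_k\le T_\rho$. Moreover, $\Phi_e(b_{k+1},b_k)=\Psi_{e,k}$ with the time ordering stated in \eqref{eq:monodromy}.

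\textbf{Step 1: contraction across whole cycles.} Define the weighted norm $\norm{\eta}_{P_o}:=(\eta\T P_o\eta)^{1/2}$. By \eqref{eq:monodromy}, $\norm{\Psi_{e,k}\eta}_{P_o}\le q\norm{\eta}_{P_o}$. Iterating over $n$ consecutive cycles gives
\[
\norm{\Phi_e(b_{k+n},b_k)}\le\sqrt{\lambda_{\max}(P_o)/\lambda_{\min}(P_o)}\,q^{n}.
\]

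\textbf{Step 2: uniform within-cycle bound.} Any subinterval $[s',t']$ lying inside one cycle is covered by at most $m$ mode pieces. Each piece has length at most $\overline\Delta_\ell$, so $\norm{\Phi_e(t',s')}\le\prod_\ell e^{\norm{\mathcal A_{e,r_\ell}}\overline\Delta_\ell}$. This bound does not depend on the schedule, and we take $K_e\ge1$ to be this product (or any larger constant).

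\textbf{Step 3: splicing.} Fix $t\ge s$. Let $b_k$ be the first boundary with $b_k\ge s$ and $b_{k+n}$ the last boundary with $b_{k+n}\le t$. If no boundary lies in $[s,t]$, then $t-s\le T_\rho$. In that case $\norm{\Phi_e(t,s)}\le K_e\le K_e q^{-1}e^{-\lambda_e(t-s)}$, because $e^{-\lambda_e(t-s)}=q^{(t-s)/T_\rho}\ge q$. Otherwise we factor
\[
\Phi_e(t,s)=\Phi_e(t,b_{k+n})\,\Phi_e(b_{k+n},b_k)\,\Phi_e(b_k,s),
\]
where the two end factors are each bounded by $K_e$. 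Next we convert $q^n$ into exponential decay. Since $t-s\le (n+2)T_\rho$, we have $n\ge (t-s)/T_\rho-2$. Hence $q^n\le q^{-2}q^{(t-s)/T_\rho}=q^{-2}e^{-\lambda_e(t-s)}$ with $\lambda_e=-\ln q/T_\rho>0$. Multiplying the three factors gives exactly $M_e=K_e^2\sqrt{\lambda_{\max}(P_o)/\lambda_{\min}(P_o)}\,q^{-2}$. This constant also covers the no-boundary case, since $q^{-1}\le q^{-2}$ and $K_e\le K_e^2$.

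\textbf{Main obstacle.} The key step is the accounting in Step 3 that relates the number of complete cycles $n$ to the elapsed time $t-s$. Cycles can have variable length, so a cycle may be shorter than $T_\rho$. This does no harm: short cycles only increase $n$ relative to $(t-s)/T_\rho$, and the decay estimate uses only the upper bound $T_\rho$ on cycle length. The partial cycles at both ends are absorbed by the factor $K_e^2$ together with the slack $q^{-2}$.
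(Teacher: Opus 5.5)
Your proof is correct and follows the same route as the paper. Both use $P_o$-norm contraction by $q$ per complete cycle, a factor $K_e^2$ for the two partial-cycle pieces, and $t-s\le(n+2)T_\rho$ to turn $q^n$ into $q^{-2}e^{-\lambda_e(t-s)}$. Your explicit no-boundary case and explicit choice of $K_e$ are minor refinements of what the paper leaves implicit. One small correction: the cycle-length bound $b_{k+1}-b_k\le T_\rho$ comes from $\mathcal D_\ell\subset[\underline\Delta_\ell,\overline\Delta_\ell]$, not from finiteness of $\mathcal D_\ell$.
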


\begin{IEEEproof}
At cycle boundaries, iteration of \eqref{eq:monodromy} contracts the
$P_o$-norm by $q$ per complete cycle. Any interval $[s,t]$ consists of at most
two partial-cycle pieces and a number $n$ of complete cycles. The partial
pieces contribute at most $K_e^2$. Because every cycle lasts at most $T_\rho$,
$t-s\le(n+2)T_\rho$, and therefore
$n\ge (t-s)/T_\rho-2$. Norm equivalence and
$q^n\le q^{-2}\exp((\ln q)(t-s)/T_\rho)$ give \eqref{eq:ues}. This completes the proof of the lemma.
\end{IEEEproof}

The following theorem establishes an attack-estimation bound, affected by the
attack rates rather than their amplitudes.
\begin{theorem}\label{thm:estimation}
Under Assumptions~\ref{as:attack} and~\ref{as:lifted}, let
$\tilde\chi_{ij}=\chi_{ij}-\hat\chi_{ij}$. For all $t\ge t_0$,
\begin{equation}\label{eq:est-bound}
 \norm{\tilde\chi_{ij}(t)}
 \le M_e e^{-\lambda_e(t-t_0)}\norm{\tilde\chi_{ij}(t_0)}
 +\frac{M_e\norm D}{\lambda_e}\bar d_{ij}.
\end{equation}
Consequently,
\begin{equation}\label{eq:est-limsup}
 \limsup_{t\to\infty}\norm{\tilde x^a_{ij}(t)}
 \le c_e\bar d_{ij},
 \qquad c_e:=\frac{M_e\norm D}{\lambda_e}.
\end{equation}
If both attacks are constant, then $\nu=0$ almost everywhere and the complete
augmented estimation error converges exponentially to zero.
\end{theorem}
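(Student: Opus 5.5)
We derive the error dynamics on a fixed link $(i,j)$, solve them by variation of constants, and then apply Lemma~\ref{lem:observer-decay}. Subtracting the observer \eqref{eq:observer} from the augmented dynamics \eqref{eq:aug-system}, the known input $\mathcal B_\rho\bar x^*$ cancels exactly. This holds because the receiver uses the same corrupted moving-frame packet $\bar x^*_{ij}$ from \eqref{eq:normalized-packet}, and by Assumption~\ref{as:aux-schedule} the same schedule $\rho$. Since $y^{z*}=\mathcal C_\rho\chi$, the innovation term becomes $L_\rho\mathcal C_\rho\tilde\chi$. This gives
\begin{equation*}
 \dot{\tilde\chi}_{ij}=\mathcal A_{e,\rho}\tilde\chi_{ij}+D\nu_{ij},
\end{equation*}
which holds almost everywhere. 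Assumption~\ref{as:attack} makes $x^a_{ij}$ and $y^a_{ij}$ locally absolutely continuous, so $\chi_{ij}$ is absolutely continuous and $\nu_{ij}\in\mathcal L_\infty$ with $\norm{\nu_{ij}}\le\bar d_{ij}$ a.e.

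The key observation is that the attack amplitudes enter neither the error equation nor the input. Amplitudes sit only in the states $y^a,x^a$; the coupling $-B_{z,r}x^a$ is part of $\mathcal A_r$ and is therefore reproduced by the observer. Only the rate $\nu$ drives the error.

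Because $\mathcal A_{e,\rho(t)}$ is piecewise constant with finitely many switches on bounded intervals, the Carath\'eodory solution is given by the variation-of-constants formula
\begin{equation*}
 \tilde\chi_{ij}(t)=\Phi_e(t,t_0)\tilde\chi_{ij}(t_0)+\int_{t_0}^{t}\Phi_e(t,s)D\nu_{ij}(s)\,ds .
\end{equation*}
Taking norms and inserting \eqref{eq:ues} from Lemma~\ref{lem:observer-decay} bounds the first term by $M_e e^{-\lambda_e(t-t_0)}\norm{\tilde\chi_{ij}(t_0)}$. The integral term is bounded by
\begin{equation*}
 M_e\norm D\bar d_{ij}\int_{t_0}^t e^{-\lambda_e(t-s)}ds\le \frac{M_e\norm D}{\lambda_e}\bar d_{ij}.
\end{equation*}
Together these give \eqref{eq:est-bound}.

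For \eqref{eq:est-limsup}, note that $\tilde x^a_{ij}$ is a sub-block of $\tilde\chi_{ij}$, so $\norm{\tilde x^a_{ij}}\le\norm{\tilde\chi_{ij}}$. The exponential term vanishes as $t\to\infty$ because $\lambda_e>0$ and $\tilde\chi_{ij}(t_0)$ is finite, which follows from the finite initial attack values. For constant attacks, $\nu\equiv0$, so the integral term is zero and $\norm{\tilde\chi_{ij}(t)}\le M_e e^{-\lambda_e(t-t_0)}\norm{\tilde\chi_{ij}(t_0)}$. The whole augmented error, including $\tilde z$ and $\tilde y^a$, therefore converges exponentially.

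The main point to check is that Lemma~\ref{lem:observer-decay} applies with arbitrary initial time $t_0$, i.e.\ not only at cycle boundaries. The lemma is stated for general $s\le t$, with the partial-cycle pieces absorbed into $K_e$ and $q^{-2}$, so this is covered. A second detail is that $\norm{\Phi_e(t,s)}$ is used in the induced Euclidean norm consistently with \eqref{eq:ues}, which the constant $M_e$ already accounts for.
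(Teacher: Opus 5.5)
Your proposal is correct and follows the paper's proof step for step. Like the paper, you subtract the observer from the augmented dynamics to obtain $\dot{\tilde\chi}=\mathcal A_{e,\rho}\tilde\chi+D\nu$, apply variation of constants with Lemma~\ref{lem:observer-decay} and the rate bound, bound the convolution integral by $1/\lambda_e$, and then take the $x^a$ block in the limit superior. Your extra remarks are sound refinements of details the paper leaves implicit: the Carath\'eodory solution, the cancellation of $\mathcal B_\rho\bar x^*$ through the shared packet and schedule, and the validity of the lemma for an arbitrary initial time $t_0$.
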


\begin{IEEEproof}
Subtracting \eqref{eq:observer} from \eqref{eq:aug-system} gives
$\dot{\tilde\chi}=\mathcal A_{e,\rho}\tilde\chi+D\nu$. Variation of constants
therefore gives
\[
 \tilde\chi(t)=\Phi_e(t,t_0)\tilde\chi(t_0)
 +\int_{t_0}^{t}\Phi_e(t,s)D\nu(s)\,ds.
\]
Taking norms and using Lemma~\ref{lem:observer-decay} and
\eqref{eq:attack-rate} yields
\[
 \begin{aligned}
 \norm{\tilde\chi(t)}
 &\le M_e e^{-\lambda_e(t-t_0)}\norm{\tilde\chi(t_0)}\\
 &\quad+M_e\norm D\bar d_{ij}
 \int_{t_0}^t e^{-\lambda_e(t-s)}\,ds.
 \end{aligned}
\]
The convolution integral satisfies
\[
 \int_{t_0}^t e^{-\lambda_e(t-s)}\,ds
 =\frac{1-e^{-\lambda_e(t-t_0)}}{\lambda_e}
 \le\frac{1}{\lambda_e}.
\]
Substitution proves \eqref{eq:est-bound}; taking the limit superior and
selecting the $x^a$ block gives \eqref{eq:est-limsup}. If $\nu=0$, only the
exponentially decaying term remains, which shows that the estimation error goes to zero asymptotically. This completes the proof of the theorem.
\end{IEEEproof}

\begin{remark}
Note that the observer does not require knowledge of $\bar d_{ij}$ and does not assume a bound on
$x^a_{ij}$ or $y^a_{ij}$. If both attack signals are constant, then
$\bar d_{ij}=0$ can be used and the estimation error converges exponentially
to zero. For slowly varying attacks, the
certified ultimate estimation-error bound decreases proportionally with the
attack-rate bound. Thus boundedness is guaranteed for each finite rate bound
$\bar d_{ij}$, and the ultimate error grows linearly in $\bar d_{ij}$ through
$c_e$.
\end{remark}

\section{Stability Analysis}
\label{sec:stability}

This section analyzes the closed-loop platoon from two complementary
perspectives. Subsection~\ref{sec:resilient} establishes internal stability of
the attack-compensated switched platoon during nominal constant-speed cruising.
Subsection~\ref{sec:string} studies acceleration propagation along a
predecessor-following chain. The two analyses use different leader-motion
conditions. The internal-stability analysis assumes a constant-speed leader,
whereas the string-stability analysis treats the leader acceleration as an
external input.

\subsection{Internal Stability}
\label{sec:resilient}

The internal-stability analysis uses the following nominal cruising
condition.
\begin{assumption}\label{as:nominal-cruising}
For the internal-stability analysis, $u_0\equiv0$, $a_0(0)=0$, and
$v_c=v_0^\star$. Hence $a_0\equiv0$ and $v_0\equiv v_0^\star$.
\end{assumption}

\begin{remark}
While we have considered zero acceleration for the leader in this subsection, nonzero leader acceleration and leader maneuvers are considered
separately in Subsection~\ref{sec:string}, where the leader acceleration is
treated as an external input and its propagation through the platoon is
analyzed.
\end{remark}

Let $d_{i-1,i}>0$ and $h>0$ denote the standstill gap and time headway. Define
$d^h_{i-1,i}=d_{i-1,i}+hv_0^\star$ and
$d^h_{0,i}=\sum_{\ell=1}^i d^h_{\ell-1,\ell}$. The nominal reference and
deviation coordinates are
\begin{align}\label{eq:reference}
x_{i,\mathrm{ref}}&=
\begin{bmatrix}s_0-d^h_{0,i}&v_0^\star&0\end{bmatrix}\T,\nonumber\\
e_i&=x_i-x_{i,\mathrm{ref}},\qquad e_0=0.
\end{align}
Under Assumption~\ref{as:nominal-cruising},
$\dot e_i=A_\sigma e_i+B_\sigma u_i$. With
$k_p=[k_{1,p},k_{2,p},k_{3,p}]\T$ and $e_v=[0,1,0]\T$, consider
\begin{equation}\label{eq:nominal}
u_i=-k_\sigma\T\sum_{j\in\mathcal I_i}(e_i-e_j)
-k_{1,\sigma}h e_v\T e_i.
\end{equation}
Define the headway-augmented matrix
\begin{equation}\label{eq:Abar}
\bar A_p=A_p-k_{1,p}hB_pe_v\T.
\end{equation}

For the predecessor chain, $\mathcal I_i=\{i-1\}$, the reference quantities
cancel and \eqref{eq:nominal} becomes
\begin{align}\label{eq:local-controller}
u_i={}&k_{1,\sigma}(s_{i-1}-s_i-d_{i-1,i}-hv_i)\nonumber\\
&+k_{2,\sigma}(v_{i-1}-v_i)\nonumber\\
&+k_{3,\sigma}(a_{i-1}-a_i).
\end{align}
Thus this implementation requires only predecessor neighbor states.

With per-link attack estimates available, \eqref{eq:nominal} is made resilient
by compensating each received state. The controller mitigates the effects of the attack on the received
state as
$x_{ij}^*-\hat x^a_{ij}=x_j+\tilde x^a_{ij}$; hence, the compensated
platoon is driven by the reconstruction errors.
Replacing each received state in \eqref{eq:nominal} by its compensated value
gives, in deviation coordinates,
\begin{equation}\label{eq:res-controller}
u_i=-\sum_{j\in\mathcal I_i}k_\sigma\T
(e_i-e_j-\tilde x^a_{ij})-k_{1,\sigma}he_v\T e_i.
\end{equation}
Set $r_i=\sum_{j\in\mathcal I_i}\tilde x^a_{ij}$,
$R=[r_1\T,\ldots,r_N\T]\T$, and
$X=[e_1\T,\ldots,e_N\T]\T$. Then
\begin{equation}\label{eq:stacked-loop}
\dot X=A_{c,\sigma}X+G_\sigma R,
\end{equation}
where
\begin{equation}\label{eq:closed-matrices}
A_{c,p}=I_N\otimes\bar A_p-(L+P)\otimes B_pk_p\T,
\qquad G_p=I_N\otimes B_pk_p\T.
\end{equation}

The next lemma gives necessary and sufficient gain conditions for stability
of each frozen physical mode.
\begin{lemma}\label{lem:frozen-plant}
For fixed $p$, $A_{c,p}$ is Hurwitz if and only if, for every eigenvalue
$\lambda_i$ of $L+P$,
\begin{align}\label{eq:routh-graph}
&k_{1,p}>0,\quad k_{1,p}h+\lambda_i k_{2,p}>0,\quad
1+\lambda_i k_{3,p}>0,\nonumber\\
&(1+\lambda_i k_{3,p})(k_{1,p}h+\lambda_i k_{2,p})
>\tau_p\lambda_i k_{1,p}.
\end{align}
\end{lemma}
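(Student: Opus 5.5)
We reduce the stability question for $A_{c,p}$ to a family of decoupled $3\times3$ problems, one for each eigenvalue of $L+P$, and then apply the Routh--Hurwitz criterion to each cubic.

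\emph{Triangularization.} By Assumption~\ref{as:graph} the graph is look-ahead, so ordering the followers along the chain makes $L+P$ lower triangular, with diagonal entries equal to its real positive eigenvalues $\lambda_1,\ldots,\lambda_N$. More generally, take any real invertible $T$ with $T^{-1}(L+P)T=\Lambda$ upper triangular and apply the similarity $T\otimes I_3$:
\[
(T\otimes I_3)^{-1}A_{c,p}(T\otimes I_3)=I_N\otimes\bar A_p-\Lambda\otimes B_pk_p\T .
\]
The right-hand side is block upper triangular, with diagonal blocks $\bar A_p-\lambda_iB_pk_p\T$. Hence the spectrum of $A_{c,p}$ is the union of the spectra of these blocks, and $A_{c,p}$ is Hurwitz if and only if every block $M_i:=\bar A_p-\lambda_iB_pk_p\T$ is Hurwitz. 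This equivalence is what gives both directions of the ``if and only if''.

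\emph{Characteristic polynomial.} Write $M_i$ explicitly. Its first two rows are $[0,1,0]$ and $[0,0,1]$. Using $\bar A_p=A_p-k_{1,p}hB_pe_v\T$, its third row is
\[
\tfrac{1}{\tau_p}\bigl[-\lambda_ik_{1,p},\;-(k_{1,p}h+\lambda_ik_{2,p}),\;-(1+\lambda_ik_{3,p})\bigr].
\]
This is a companion matrix, so its characteristic polynomial, after multiplying by $\tau_p>0$, is
\[
\tau_p s^3+(1+\lambda_ik_{3,p})s^2+(k_{1,p}h+\lambda_ik_{2,p})s+\lambda_ik_{1,p}.
\]

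\emph{Routh--Hurwitz.} A cubic $a_3s^3+a_2s^2+a_1s+a_0$ with $a_3>0$ is Hurwitz if and only if $a_2,a_1,a_0>0$ and $a_2a_1>a_3a_0$. Since $\lambda_i>0$, the condition $\lambda_ik_{1,p}>0$ is equivalent to $k_{1,p}>0$. The remaining conditions are exactly the other inequalities in \eqref{eq:routh-graph}, and this holds for each $i$.

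The only delicate step is justifying the triangularization with real positive eigenvalues. If $L+P$ is not diagonalizable, a diagonal similarity is unavailable, so we deliberately use a triangular form (Schur, or the look-ahead ordering itself). The block-triangular structure then still separates the spectrum, and no diagonalizability assumption is needed.
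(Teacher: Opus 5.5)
Your proposal is correct and follows the paper's proof essentially step for step: a real triangularizing similarity $T\otimes I_3$ reduces $A_{c,p}$ to block-triangular form with companion blocks $\bar A_p-\lambda_iB_pk_p\T$, and Routh--Hurwitz applied to each cubic yields exactly \eqref{eq:routh-graph}. The differences are immaterial: you use the $\tau_p$-scaled cubic rather than the monic one, and your observation that the look-ahead ordering already makes $L+P$ lower triangular (so $T=I$ suffices) is a valid shortcut to the paper's existence-of-$T$ step.
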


\begin{IEEEproof}
Let $H=L+P$. Since its eigenvalues are real, there is a nonsingular
$T\in\R^{N\times N}$ such that
$J=T^{-1}HT$ is upper triangular with diagonal entries
$\lambda_1,\ldots,\lambda_N$. Under the similarity transformation
$\zeta=(T^{-1}\otimes I_3)X$,
\begin{align*}
&(T^{-1}\otimes I_3)A_{c,p}(T\otimes I_3)\\
&\qquad=I_N\otimes\bar A_p-J\otimes B_pk_p\T.
\end{align*}
The transformed matrix is block upper triangular, with diagonal blocks
\[
F_{i,p}:=\bar A_p-\lambda_iB_pk_p\T
=\begin{bmatrix}
0&1&0\\
0&0&1\\
-c_{0,i,p}&-c_{1,i,p}&-c_{2,i,p}
\end{bmatrix}.
\]
Here
\[
\begin{aligned}
c_{0,i,p}&=\lambda_i k_{1,p}/\tau_p,\\
c_{1,i,p}&=(k_{1,p}h+\lambda_i k_{2,p})/\tau_p,\\
c_{2,i,p}&=(1+\lambda_i k_{3,p})/\tau_p.
\end{aligned}
\]
Similarity preserves eigenvalues, and the spectrum of a block
upper-triangular matrix is the union of the spectra of its diagonal blocks.
Consequently, $A_{c,p}$ is Hurwitz if and only if every displayed block is
Hurwitz. The characteristic polynomial of block $i$ is
\[
s^3+c_{2,i,p}s^2+c_{1,i,p}s+c_{0,i,p}.
\]
For a monic cubic $s^3+a_2s^2+a_1s+a_0$, the Routh--Hurwitz conditions are
$a_2>0$, $a_1>0$, $a_0>0$, and $a_2a_1>a_0$. Since
$\tau_p>0$ and $\lambda_i>0$, substituting the three coefficients gives
exactly \eqref{eq:routh-graph}. This completes the proof of the lemma.
\end{IEEEproof}

The resulting switched-platoon bound is stated next in terms of the certified
rate-dependent estimation errors.
\begin{theorem}\label{thm:platoon}
Suppose there are $P_p^c\succ0$, $\lambda_c>0$, and $\mu_c\ge1$ such that
\begin{equation}\label{eq:plant-cert}
A_{c,p}\T P_p^c+P_p^cA_{c,p}\preceq-\lambda_cP_p^c,
\qquad P_p^c\preceq\mu_cP_q^c
\end{equation}
for all $p,q\in\mathcal S$. Assume in addition that the physical ADT satisfies
\begin{equation}\label{eq:plant-adt}
\tau_a>\frac{2\ln\mu_c}{\lambda_c},
\qquad \xi_c:=\frac{\lambda_c}{2}-\frac{\ln\mu_c}{\tau_a}>0.
\end{equation}
Here $\tau_a$ is the average dwell time introduced in \eqref{eq:adt}.
Then the compensated platoon is Uniformly, Ultimately Bounded (UUB). More precisely, with $N_0$ the chatter
bound of \eqref{eq:adt}, let
\begin{align}\label{eq:constants-control}
m_c&:=\min_p\lambda_{\min}(P_p^c),\qquad
\kappa_c:=\max_p\lambda_{\max}(G_p\T P_p^cG_p),\nonumber\\
\bar R_a&:=c_e^2\sum_{i=1}^N d_i
\sum_{j\in\mathcal I_i}\bar d_{ij}^{\,2},\qquad d_i:=|\mathcal I_i|.
\end{align}
Then
\begin{equation}\label{eq:platoon-radius}
\limsup_{t\to\infty}\norm{X(t)}^2
\le\frac{2\mu_c^{N_0}\kappa_c}
{m_c\lambda_c\xi_c}\,\bar R_a.
\end{equation}
\end{theorem}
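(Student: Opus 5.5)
The plan is a standard multiple-Lyapunov-function / average-dwell-time argument applied to \eqref{eq:stacked-loop}, with the reconstruction error $R$ treated as an exogenous input whose ultimate size is supplied by Theorem~\ref{thm:estimation}. Take the mode-dependent function $V(t)=X\T P^c_{\sigma(t)}X$. Since $\sigma$ is right-continuous and piecewise constant, $V$ is piecewise absolutely continuous. On any interval where $\sigma\equiv p$, the first inequality in \eqref{eq:plant-cert} gives
\[
\dot V\le-\lambda_cV+2X\T P^c_pG_pR .
\]
I would bound the cross term by Young's inequality in the $P^c_p$-weighted inner product:
\[
2X\T P^c_pG_pR\le\frac{\lambda_c}{2}X\T P^c_pX+\frac{2}{\lambda_c}R\T G_p\T P^c_pG_pR .
\]
Using the definition of $\kappa_c$ in \eqref{eq:constants-control}, this yields on each dwell interval
\[
\dot V\le-\frac{\lambda_c}{2}V+\frac{2\kappa_c}{\lambda_c}\norm{R}^2 .
\]
At a switching instant, the second inequality in \eqref{eq:plant-cert} gives $V(t_k)\le\mu_cV(t_k^-)$.

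Next I would integrate piecewise and chain the jumps. Let $N_\sigma(t,s)$ count the switches in $(s,t]$. Applying the comparison lemma on each dwell interval and multiplying by $\mu_c$ at each switch gives
\[
V(t)\le\mu_c^{N_\sigma(t,t_0)}e^{-\frac{\lambda_c}{2}(t-t_0)}V(t_0)
+\frac{2\kappa_c}{\lambda_c}\int_{t_0}^t\mu_c^{N_\sigma(t,s)}e^{-\frac{\lambda_c}{2}(t-s)}\norm{R(s)}^2ds .
\]
Inserting the ADT bound \eqref{eq:adt} gives $\mu_c^{N_\sigma(t,s)}\le\mu_c^{N_0}e^{\frac{\ln\mu_c}{\tau_a}(t-s)}$, because $\mu_c\ge1$. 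Both exponentials then combine into $e^{-\xi_c(t-s)}$, with $\xi_c>0$ by \eqref{eq:plant-adt]}. This produces
\[
V(t)\le\mu_c^{N_0}e^{-\xi_c(t-t_0)}V(t_0)+\frac{2\mu_c^{N_0}\kappa_c}{\lambda_c}\int_{t_0}^te^{-\xi_c(t-s)}\norm{R(s)}^2ds ,
\]
which already gives uniform boundedness whenever $R$ is bounded. Theorem~\ref{thm:estimation} guarantees that $R$ is bounded for finite initial estimation errors.

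For the ultimate bound I would first control $\norm{R}$. By Cauchy--Schwarz, $\norm{r_i}^2\le d_i\sum_{j\in\mathcal I_i}\norm{\tilde x^a_{ij}}^2$. Summing over $i$ and applying \eqref{eq:est-limsup} gives $\limsup_{t\to\infty}\norm{R(t)}^2\le\bar R_a$.

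The step needing care is transferring this limsup through the convolution. Fix $\epsilon>0$ and choose $T_\epsilon$ with $\norm{R(s)}^2\le\bar R_a+\epsilon$ for all $s\ge T_\epsilon$. Split the integral at $T_\epsilon$. The piece over $[t_0,T_\epsilon]$ is multiplied by $e^{-\xi_c(t-T_\epsilon)}$ and vanishes as $t\to\infty$, since $R$ is bounded there. The tail piece is at most $(\bar R_a+\epsilon)/\xi_c$. Letting $\epsilon\to0$ yields
\[
\limsup_{t\to\infty}V\le\frac{2\mu_c^{N_0}\kappa_c}{\lambda_c\xi_c}\bar R_a .
\]
Finally, $V\ge m_c\norm{X}^2$ with $m_c$ as in \eqref{eq:constants-control}, so dividing by $m_c$ gives \eqref{eq:platoon-radius}. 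I expect the main obstacle to be the bookkeeping of the jump factors together with the Young splitting. The split must spend exactly half of $\lambda_c$ on absorbing the cross term, because that is why $\xi_c$ in \eqref{eq:plant-adt} contains $\lambda_c/2$ rather than $\lambda_c$.
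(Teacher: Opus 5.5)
Your proposal is correct and follows the paper's proof essentially step for step. It uses the same mode-dependent function $X\T P^c_{\sigma}X$, the same $P^c_p$-weighted Young split with parameter $\lambda_c/2$, and the same jump/ADT bookkeeping producing $e^{-\xi_c(t-s)}$, then the Cauchy--Schwarz bound $\limsup\norm{R}^2\le\bar R_a$ and an $\varepsilon$-argument. The differences are cosmetic: the paper restarts the comparison inequality at $T_\varepsilon$ rather than splitting the convolution integral, and it adds a one-line remark that $T_\varepsilon$ can be chosen uniformly over bounded sets of initial plant and observer errors to justify ``uniformly''; your explicit comparison bound together with \eqref{eq:est-bound} supplies this as well.
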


\begin{IEEEproof}
Define the mode-dependent Lyapunov function
$V_{\sigma(t)}(X)=X\T P_{\sigma(t)}^cX$. On an interval where
$\sigma(t)=p$, write $V_p=X\T P_p^cX$. Since $P_p^c$ is constant within that
interval, \eqref{eq:stacked-loop} gives
\begin{align*}
\dot V_p
&=X\T\!\left(A_{c,p}\T P_p^c+P_p^cA_{c,p}\right)X
+2X\T P_p^cG_pR\\
&\le-\lambda_cV_p+2X\T P_p^cG_pR,
\end{align*}
where the inequality follows from \eqref{eq:plant-cert}. Apply Young's
inequality to
$a=(P_p^c)^{1/2}X$ and
$b=(P_p^c)^{1/2}G_pR$ with parameter $\lambda_c/2$:
\[
2a\T b\le\frac{\lambda_c}{2}\norm a^2
+\frac{2}{\lambda_c}\norm b^2.
\]
Using the definition of $\kappa_c$ in \eqref{eq:constants-control} yields
\begin{equation}\label{eq:plant-flow-bound}
\dot V_p\le-\frac{\lambda_c}{2}V_p
+\frac{2\kappa_c}{\lambda_c}\norm R^2.
\end{equation}

At a switch from mode $p$ to mode $q$, the state $X$ is continuous, while
\eqref{eq:plant-cert} gives
\[
V_q^+=X\T P_q^cX\le\mu_cX\T P_p^cX=\mu_cV_p^-.
\]
Iterating \eqref{eq:plant-flow-bound} over the flow intervals introduces one
factor $\mu_c$ at each switch. By \eqref{eq:adt},
\[
\mu_c^{N_\sigma(t,s)}
e^{-\lambda_c(t-s)/2}
\le\mu_c^{N_0}e^{-\xi_c(t-s)}.
\]
Consequently,
\begin{align}\label{eq:adt-comparison}
V_{\sigma(t)}(t)
&\le\mu_c^{N_0}e^{-\xi_c(t-t_0)}
V_{\sigma(t_0)}(t_0)\nonumber\\
&\quad+\frac{2\mu_c^{N_0}\kappa_c}{\lambda_c}
\int_{t_0}^t e^{-\xi_c(t-s)}\norm{R(s)}^2\,ds.
\end{align}

Furthermore, for every follower, Cauchy--Schwarz gives
\[
\left|\sum_{j\in\mathcal I_i}\tilde x^a_{ij}\right|^2
\le d_i\sum_{j\in\mathcal I_i}\norm{\tilde x^a_{ij}}^2.
\]
Summing over $i$, applying Theorem~\ref{thm:estimation}, and using the
finiteness of $\mathcal E_c$ gives
\[
\limsup_{t\to\infty}\norm{R(t)}^2\le\bar R_a.
\]
Fix $\varepsilon>0$. There is therefore a time $T_\varepsilon$ such that
$\norm{R(t)}^2\le\bar R_a+\varepsilon$ for all
$t\ge T_\varepsilon$. Set
\[
c_\varepsilon:=
\frac{2\kappa_c}{\lambda_c}(\bar R_a+\varepsilon).
\]
Applying \eqref{eq:adt-comparison} from $T_\varepsilon$ to $t$ and evaluating
the convolution integral gives
\begin{align*}
V_{\sigma(t)}(t)
&\le\mu_c^{N_0}e^{-\xi_c(t-T_\varepsilon)}
V_{\sigma(T_\varepsilon)}(T_\varepsilon)
+\mu_c^{N_0}c_\varepsilon
\int_{T_\varepsilon}^t e^{-\xi_c(t-s)}\,ds\\
&=\mu_c^{N_0}
\left(V_{\sigma(T_\varepsilon)}(T_\varepsilon)
-\frac{c_\varepsilon}{\xi_c}\right)
e^{-\xi_c(t-T_\varepsilon)}
+\frac{\mu_c^{N_0}c_\varepsilon}{\xi_c}.
\end{align*}
For each fixed $\varepsilon>0$, letting $t\to\infty$ eliminates the transient
term and gives
$
\limsup_{t\to\infty}V_{\sigma(t)}(t)
\le
\frac{2\mu_c^{N_0}\kappa_c}
{\lambda_c\xi_c}
\left(\bar R_a+\varepsilon\right).
$
Since this inequality holds for every $\varepsilon>0$, letting
$\varepsilon\to0^+$ yields
$
\limsup_{t\to\infty}V_{\sigma(t)}(t)
\le
\frac{2\mu_c^{N_0}\kappa_c}
{\lambda_c\xi_c}\bar R_a.
$

Finally,
$V_{\sigma(t)}(X)\ge m_c\norm X^2$, which proves
\eqref{eq:platoon-radius}. The observer and switched-system estimates used
above are uniform; hence, on every bounded set of initial plant and observer
errors, $T_\varepsilon$ can be chosen uniformly. This proves the stated
entry-time form of UUB. If all attacks are constant, Theorem~\ref{thm:estimation} gives
$\norm{R(t)}^2\le c\,e^{-2\lambda_e(t-t_0)}$ for some $c>0$. Substituting into
\eqref{eq:adt-comparison}, the convolution
$\int_{t_0}^t e^{-\xi_c(t-s)}\norm{R(s)}^2\,ds$ decays at rate
$\min(\xi_c,2\lambda_e)$; hence $V_{\sigma(t)}(t)\to0$ and, by
$V_{\sigma(t)}(X)\ge m_c\norm X^2$, $X(t)\to0$ exponentially. When
$\xi_c=2\lambda_e$, the convolution produces a $t\,e^{-\xi_c t}$ term, which is
bounded by $e^{-\lambda' t}$ for any $\lambda'<\xi_c$. This completes the proof of the theorem.
\end{IEEEproof}

\subsection{String Stability}
\label{sec:string}
Unlike the preceding internal-stability analysis, which allows the general
leader--follower communication topology, the string-stability analysis is
restricted to the predecessor-following chain.
 The leader
acceleration $a_0$ is now an external input, so the nominal-cruising condition
of Assumption~\ref{as:nominal-cruising} is not imposed.

\begin{remark}
The estimation-error bound of Theorem~\ref{thm:estimation} does not depend on
boundedness of the moving-frame state $\bar x_j$: subtracting
\eqref{eq:observer} from \eqref{eq:aug-system} cancels the common input
$\bar x^*$, leaving $\dot{\tilde\chi}=\mathcal A_{e,\rho}\tilde\chi+D\nu$. Hence
the residual $r_i=\tilde x^a_{i,i-1}$, and therefore the propagation bound
\eqref{eq:power-string}, holds for any bounded-rate leader input, including
maneuvers with $a_0\not\equiv0$. Boundedness of the transmitted auxiliary
signal $y_j^{z}$ is a separate requirement: by \eqref{eq:auxiliary-bound} it
needs $\bar x_j\in\mathcal L_\infty$, whose position component requires
$v_j-v_c\in\mathcal L_1\cap\mathcal L_\infty$ as in
\eqref{eq:moving-frame-bound}. We therefore restrict the admissible leader
maneuvers to those for which $v_j-v_c\in\mathcal L_1\cap\mathcal L_\infty$ for
every follower.
\end{remark}

Define the physical CTH
spacing error and relative velocity
\begin{align}\label{eq:local-string-state}
\zeta_i&=s_{i-1}-s_i-d_{i-1,i}-hv_i,\nonumber\\
\Delta v_i&=v_{i-1}-v_i,\qquad
\xi_i=\begin{bmatrix}\zeta_i&\Delta v_i&a_i\end{bmatrix}\T.
\end{align}
Writing $r_i=\tilde x^a_{i,i-1}\in\R^3$, the local dynamics are
\begin{align}\label{eq:local-string-dynamics}
\dot\xi_i&=A_p^s\xi_i+B_p^aa_{i-1}+B_p^rr_i,\nonumber\\
a_i&=C_a\xi_i.
\end{align}
Here $p=\sigma(t)$, and
\begin{equation}\label{eq:string-matrices}
A_p^s=\begin{bmatrix}
0&1&-h\\
0&0&-1\\
k_{1,p}/\tau_p&k_{2,p}/\tau_p&-(1+k_{3,p})/\tau_p
\end{bmatrix},
\end{equation}
\begin{align*}
B_p^a&=\begin{bmatrix}0&1&k_{3,p}/\tau_p\end{bmatrix}\T,
\qquad C_a=\begin{bmatrix}0&0&1\end{bmatrix},\\
B_p^r&=\begin{bmatrix}
0&0&0\\
0&0&0\\
k_{1,p}/\tau_p&k_{2,p}/\tau_p&k_{3,p}/\tau_p
\end{bmatrix}.
\end{align*}

Following the input--output definition of $\mathcal L_2$ string stability in
\cite{Ploeg14}, we evaluate acceleration propagation over finite time horizons
and for every admissible switching signal. We require non-amplification rather
than strict attenuation because each internally stable frozen CTH link has
unit acceleration gain at zero frequency.

We next formalize the nominal switched acceleration non-amplification property
used in the string analysis.
\begin{definition}
\label{def:nominal-string}
The predecessor chain is nominally switched $\mathcal L_2$ acceleration string
stable if, for zero local initial states and $r_i\equiv0$,
\begin{equation}\label{eq:nominal-string-def}
\norm{a_i}_{2,T}\le\norm{a_{i-1}}_{2,T}
\end{equation}
for every $i$, every $T>0$, and every admissible physical switching signal.
\end{definition}

The next result establishes nominal string stability and quantifies the
additional acceleration power induced by nonvanishing estimation errors.
\begin{theorem}
\label{thm:string}
Assume there is $P^0\succ0$ and $\alpha_0>0$ such that
\begin{equation}\label{eq:common-stability}
(A_p^s)\T P^0+P^0A_p^s\preceq-\alpha_0P^0,
\qquad p\in\mathcal S.
\end{equation}
Suppose also that there are common matrices $P^a\succ0$, $P^r\succ0$ and a
constant $\gamma_r>0$ satisfying the following inequalities for every $p$.
Here $\He(M):=M+M\T$.
\begin{equation}\label{eq:lmi-nominal}
\begin{bmatrix}
\He(P^aA_p^s)+C_a\T C_a&P^aB_p^a\\
(B_p^a)\T P^a&-1
\end{bmatrix}\preceq0,
\end{equation}
\begin{equation}\label{eq:lmi-residual}
\begin{bmatrix}
\He(P^rA_p^s)+C_a\T C_a&P^rB_p^r\\
(B_p^r)\T P^r&-\gamma_r^2I_3
\end{bmatrix}\preceq0.
\end{equation}
Then the nominal platoon satisfies \eqref{eq:nominal-string-def}. With
nonvanishing estimation errors and arbitrary finite initial conditions,
\begin{equation}\label{eq:power-step}
\norm{a_i}_{\mathcal P}
\le\norm{a_{i-1}}_{\mathcal P}+\gamma_r\norm{r_i}_{\mathcal P},
\end{equation}
and consequently
\begin{equation}\label{eq:power-string}
\norm{a_i}_{\mathcal P}
\le\norm{a_0}_{\mathcal P}
+\gamma_r\sum_{\ell=1}^i\norm{r_\ell}_{\mathcal P}.
\end{equation}
By Theorem~\ref{thm:estimation},
$\norm{r_i}_{\mathcal P}\le c_e\bar d_{i,i-1}$.
\end{theorem}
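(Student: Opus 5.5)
The argument rests on dissipation inequalities obtained from the two LMIs, using common storage functions so that switching creates no jumps. For the nominal claim, I will take $S_a(\xi_i)=\xi_i\T P^a\xi_i$ along trajectories of \eqref{eq:local-string-dynamics} with $r_i\equiv0$. On any interval where $\sigma=p$, I will pre- and post-multiply \eqref{eq:lmi-nominal} by $\operatorname{col}(\xi_i,a_{i-1})$. This gives
\[
\dot S_a+a_i^2-a_{i-1}^2\le0 .
\]
Because $P^a$ is common to all modes and $\xi_i$ is absolutely continuous, $S_a$ does not jump at switching instants. The inequality therefore holds almost everywhere for every admissible $\sigma$. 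Integrating over $[0,T]$ from $\xi_i(0)=0$ and using $S_a(T)\ge0$ yields $\norm{a_i}_{2,T}^2\le\norm{a_{i-1}}_{2,T}^2$, which is \eqref{eq:nominal-string-def}.

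For the perturbed case, I will use linearity to split $\xi_i=\xi_i^a+\xi_i^r+\xi_i^0$. Here $\xi_i^a$ is driven by $a_{i-1}$ from zero initial state, $\xi_i^r$ is driven by $r_i$ from zero initial state, and $\xi_i^0$ is the free response from $\xi_i(0)$; this superposition is valid for the switched linear system with a fixed $\sigma$. Then $a_i=a_i^a+a_i^r+a_i^0$. The first step gives $\norm{a_i^a}_{2,T}\le\norm{a_{i-1}}_{2,T}$. Applying the same computation to \eqref{eq:lmi-residual} with storage $\xi\T P^r\xi$ gives $\norm{a_i^r}_{2,T}\le\gamma_r\norm{r_i}_{2,T}$. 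For the free response, the common Lyapunov inequality \eqref{eq:common-stability} gives
\[
\norm{\xi_i^0(t)}\le\sqrt{\lambda_{\max}(P^0)/\lambda_{\min}(P^0)}\,e^{-\alpha_0t/2}\norm{\xi_i(0)} .
\]
Hence $\norm{a_i^0}_{2,T}\le c_0\norm{\xi_i(0)}$ uniformly in $T$. The triangle inequality for $\norm{\cdot}_{2,T}$ then yields
\[
\norm{a_i}_{2,T}\le\norm{a_{i-1}}_{2,T}+\gamma_r\norm{r_i}_{2,T}+c_0\norm{\xi_i(0)} .
\]

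Next I will divide by $\sqrt T$ and take $\limsup_{T\to\infty}$. The constant term disappears. Using $\limsup(x+y)\le\limsup x+\limsup y$, together with the fact that $\sqrt{\cdot}$ is continuous and monotone so that $\limsup\sqrt{f}=\sqrt{\limsup f}$, this gives \eqref{eq:power-step}. Iterating \eqref{eq:power-step} from $i$ down to $1$ gives \eqref{eq:power-string}. For the last claim, $r_i=\tilde x^a_{i,i-1}$ with $\limsup_t\norm{r_i(t)}\le c_e\bar d_{i,i-1}$ by Theorem~\ref{thm:estimation}. A signal whose pointwise $\limsup$ is at most $c$ has power seminorm at most $c$, since the transient on any finite window contributes zero after averaging. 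This gives $\norm{r_i}_{\mathcal P}\le c_e\bar d_{i,i-1}$.

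I expect the main obstacle to be handling the nonzero initial condition and the switching jointly in the power bound. One might simply add the two dissipation inequalities, but that does not produce the separate constants $1$ and $\gamma_r$, so the superposition split is what is needed. It also requires care that each component is well defined under the same $\sigma$. This is where \eqref{eq:common-stability} is essential: it controls the free response uniformly over switching signals without any dwell-time condition. A secondary point is finiteness of the power seminorms. If $\norm{a_{i-1}}_{\mathcal P}=\infty$, \eqref{eq:power-step} holds trivially. Otherwise $\norm{a_0}_{\mathcal P}<\infty$ follows from the admissible leader maneuvers, so the right-hand side is meaningful by induction.
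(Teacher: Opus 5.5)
Your proposal is correct and follows the paper's proof essentially step for step. It uses the same superposition into the $a_{i-1}$-driven, $r_i$-driven, and free responses, the same jump-free common storages $\xi\T P^a\xi$ and $\xi\T P^r\xi$, the common $P^0$ to make the free response square-integrable under arbitrary switching, the finite-horizon triangle inequality followed by division by $\sqrt{T}$ and $\limsup$, iteration down the chain, and the $\varepsilon$-argument giving $\norm{r_i}_{\mathcal P}\le c_e\bar d_{i,i-1}$; your closing remark about finiteness of $\norm{a_0}_{\mathcal P}$ is not needed, since all terms are nonnegative and the inequalities hold in $[0,\infty]$.
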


\begin{IEEEproof}
By linearity, decompose the state and acceleration as
\[
\xi_i=\xi_i^a+\xi_i^r+\xi_i^h,\qquad
a_i=a_i^a+a_i^r+a_i^h.
\]
The superscripts $a$, $r$, and $h$ denote, respectively, the zero-state
response driven by $a_{i-1}$, the zero-state response driven by $r_i$, and the
homogeneous response from the local initial state.

First consider the predecessor-acceleration channel and define
$V_i^a=(\xi_i^a)\T P^a\xi_i^a$. On an interval with $\sigma(t)=p$,
\[
\dot V_i^a
=(\xi_i^a)\T\He(P^aA_p^s)\xi_i^a
+2(\xi_i^a)\T P^aB_p^aa_{i-1}.
\]
Premultiplying and postmultiplying \eqref{eq:lmi-nominal} by
$\operatorname{col}(\xi_i^a,a_{i-1})$ gives
\[
\dot V_i^a+|a_i^a|^2-|a_{i-1}|^2\le0.
\]
The matrix $P^a$ is common to all modes, so $V_i^a$ does not jump when
$\sigma$ switches. Integrating over $[0,T]$, using the zero initial state, and
dropping the nonnegative terminal value gives
\[
\int_0^T|a_i^a(t)|^2dt
\le\int_0^T|a_{i-1}(t)|^2dt,
\]
or $\norm{a_i^a}_{2,T}\le\norm{a_{i-1}}_{2,T}$. With
$r_i\equiv0$ and zero local initial state, this proves
\eqref{eq:nominal-string-def}.

For the estimation-error channel, let
$V_i^r=(\xi_i^r)\T P^r\xi_i^r$. Applying
\eqref{eq:lmi-residual} to $\operatorname{col}(\xi_i^r,r_i)$ gives
\[
\dot V_i^r+|a_i^r|^2-\gamma_r^2\norm{r_i}^2\le0.
\]
Again the common storage has no switching jumps. Integration from the zero
initial state yields
$
\norm{a_i^r}_{2,T}\le\gamma_r\norm{r_i}_{2,T}.
$

Finally, \eqref{eq:common-stability} gives
$\dot V_i^h\le-\alpha_0V_i^h$ for
$V_i^h=(\xi_i^h)\T P^0\xi_i^h$ in every mode. Since $P^0$ is common, the
homogeneous response decays exponentially under arbitrary switching.
Consequently, $a_i^h\in\mathcal L_2$ and
$\norm{a_i^h}_{\mathcal P}=0$.

For the complete response, the finite-horizon triangle inequality gives
\[
\norm{a_i}_{2,T}
\le\norm{a_i^a}_{2,T}
+\norm{a_i^r}_{2,T}
+\norm{a_i^h}_{2,T}.
\]
Dividing the finite-horizon inequality by $\sqrt{T}$, taking the limit superior
as $T\to\infty$, and using~\eqref{eq:powernorm} gives
\eqref{eq:power-step}. Iterating this one-step inequality from vehicle $1$ to
vehicle $i$ yields~\eqref{eq:power-string}.
 By
\eqref{eq:est-limsup}, for every $\varepsilon>0$, there exists a finite time
$T_\varepsilon$ such that
$
\norm{r_i(t)}
\le
c_e\bar d_{i,i-1}+\varepsilon,
\qquad
t\ge T_\varepsilon.
$
Therefore, for every $T>T_\varepsilon$,
$
\frac{1}{T}\int_0^T\norm{r_i(t)}^2\,dt
\le
\frac{1}{T}\int_0^{T_\varepsilon}\norm{r_i(t)}^2\,dt
+
\frac{T-T_\varepsilon}{T}
\left(c_e\bar d_{i,i-1}+\varepsilon\right)^2.
$
Taking the limit superior as $T\to\infty$ eliminates the contribution of the
finite initial interval and gives
$
\norm{r_i}_{\mathcal P}
\le
c_e\bar d_{i,i-1}+\varepsilon.
$
Since this inequality holds for every $\varepsilon>0$, letting
$\varepsilon\to0^+$ yields
$
\norm{r_i}_{\mathcal P}
\le
c_e\bar d_{i,i-1}.
$

\end{IEEEproof}

\begin{remark}
Note that a joint inequality for the
input $[a_{i-1},r_i\T]\T$ could also be used. The separate formulation exploits
superposition, preserves the unit gain condition for the predecessor-acceleration channel, and can reduce conservatism in the estimation-error
bound.
\end{remark}

\begin{corollary}
\label{cor:one-attacked-link}
Suppose only link $(k,k-1)$ has a nonzero estimation error, so that
$r_k\ne0$ and $r_i=0$ for $i\ne k$. Then, for every $i\ge k$,
\begin{equation}\label{eq:single-link-propagation}
\norm{a_i}_{\mathcal P}
\le\norm{a_{k-1}}_{\mathcal P}
+\gamma_r\norm{r_k}_{\mathcal P}
\le\norm{a_{k-1}}_{\mathcal P}
+\gamma_rc_e\bar d_{k,k-1}.
\end{equation}
Moreover,
$\norm{a_i}_{\mathcal P}\le\norm{a_{i-1}}_{\mathcal P}$ for every
$i>k$.
\end{corollary}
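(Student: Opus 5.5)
The plan is to obtain the corollary directly from the one-step power inequality \eqref{eq:power-step} of Theorem~\ref{thm:string}, applied vehicle by vehicle along the chain starting at index $k$.

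First, for $i>k$ the hypothesis gives $r_i=0$, hence $\norm{r_i}_{\mathcal P}=0$. Substituting this into \eqref{eq:power-step} immediately yields $\norm{a_i}_{\mathcal P}\le\norm{a_{i-1}}_{\mathcal P}$, which is the second claim. Vehicles $i>k$ carry nonzero initial conditions, but this does not matter: \eqref{eq:power-step} was proved for arbitrary finite initial conditions, because the homogeneous part has zero power seminorm under the common Lyapunov function $P^0$. For $i=k$, \eqref{eq:power-step} gives $\norm{a_k}_{\mathcal P}\le\norm{a_{k-1}}_{\mathcal P}+\gamma_r\norm{r_k}_{\mathcal P}$. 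A finite induction on $i\ge k$, chaining the non-amplification steps for indices $k+1,\ldots,i$ with this base step, gives
\[
\norm{a_i}_{\mathcal P}\le\norm{a_k}_{\mathcal P}\le\norm{a_{k-1}}_{\mathcal P}+\gamma_r\norm{r_k}_{\mathcal P}.
\]
This is the first inequality in \eqref{eq:single-link-propagation}.

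For the second inequality, invoke the final statement of Theorem~\ref{thm:string}, namely $\norm{r_k}_{\mathcal P}\le c_e\bar d_{k,k-1}$. This bound was derived from \eqref{eq:est-limsup} of Theorem~\ref{thm:estimation} by an $\varepsilon$-argument that discards the finite initial interval. Multiplying by $\gamma_r>0$ completes \eqref{eq:single-link-propagation}.

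The argument is essentially bookkeeping, and the only point that needs care is the status of $\norm{a_{k-1}}_{\mathcal P}$. This quantity must be finite for the chain of inequalities to be meaningful, but if it is infinite the inequalities hold trivially. Moreover, upstream vehicles satisfy $\norm{a_{k-1}}_{\mathcal P}\le\norm{a_0}_{\mathcal P}$ by the same iteration with $r_\ell=0$ for $\ell<k$, so no further estimate is needed.
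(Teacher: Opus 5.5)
Your proposal is correct and matches the paper's proof: apply \eqref{eq:power-step} at $i=k$, use $r_i=0$ to reduce it to $\norm{a_i}_{\mathcal P}\le\norm{a_{i-1}}_{\mathcal P}$ for $i>k$, chain these from $k$ to $i$, and finish with $\norm{r_k}_{\mathcal P}\le c_e\bar d_{k,k-1}$. Your side remarks on arbitrary initial conditions and a possibly infinite $\norm{a_{k-1}}_{\mathcal P}$ are harmless additions that the paper's argument does not need.
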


\begin{IEEEproof}
At vehicle $k$, since $r_k\ne0$, \eqref{eq:power-step} gives
\[
\norm{a_k}_{\mathcal P}\le\norm{a_{k-1}}_{\mathcal P}
+\gamma_r\norm{r_k}_{\mathcal P}.
\]
For each $i>k$, $r_i=0$, so \eqref{eq:power-step} reduces to
$\norm{a_i}_{\mathcal P}\le\norm{a_{i-1}}_{\mathcal P}$. Chaining these
inequalities from $k$ to $i$ yields
$\norm{a_i}_{\mathcal P}\le\norm{a_{k-1}}_{\mathcal P}
+\gamma_r\norm{r_k}_{\mathcal P}$ for every $i\ge k$. Finally,
Theorem~\ref{thm:estimation} bounds
$\norm{r_k}_{\mathcal P}\le c_e\bar d_{k,k-1}$, which proves
\eqref{eq:single-link-propagation}.
\end{IEEEproof}

Thus a nonzero estimation error can increase acceleration across the affected
link, so the nominal result in Definition~\ref{def:nominal-string} does not
apply across that link.
After that link, however, the effect is not further amplified in the certified
power norm. This statement does not bound instantaneous acceleration peaks,
which would require an $\mathcal L_\infty$ analysis. If the link attacks are
constant, $\bar d_{k,k-1}=0$ and the estimation error converges exponentially
to zero after the observer transient. A smaller nonzero attack-rate bound gives a proportionally smaller certified contribution in
\eqref{eq:single-link-propagation}.

\begin{remark}
The nominal statement \eqref{eq:nominal-string-def} is string-length
independent, that is, it holds with constants independent of the number of
vehicles $N$. The bound \eqref{eq:power-string} extends this to
the perturbed setting with an explicit, computable dependence on the
estimation errors of the upstream links. This yields a length-independence criterion for the derived certificate: the
bound \eqref{eq:power-string} is uniform in $N$ if and only if the per-link
error powers are spatially summable,
$\sup_{i\ge1}\sum_{\ell=1}^i\norm{r_\ell}_{\mathcal P}<\infty,$
which holds, for instance, whenever
$\sum_{\ell\ge1}\bar d_{\ell,\ell-1}<\infty$. The certified attack-rate bounds
$\bar d_{\ell,\ell-1}$ thus directly determine whether the certified bound is
string-length independent. Spatial summability is established as necessary and
sufficient for uniformity of this upper bound; it is not claimed necessary for
the true closed-loop acceleration response.
\end{remark}

For a fixed mode, the nominal transfer function in
\eqref{eq:local-string-dynamics} is
\begin{align}\label{eq:frozen-transfer}
H_p(s)&=\frac{a_i(s)}{a_{i-1}(s)}\nonumber\\
&=
\frac{k_{3,p}s^2+k_{2,p}s+k_{1,p}}
{\tau_ps^3+(1+k_{3,p})s^2
+(k_{2,p}+k_{1,p}h)s+k_{1,p}}.
\end{align}

The following lemma gives an exact frozen-mode frequency-domain test for
acceleration non-amplification, i.e., for $\norm{H_p}_\infty\le1$.
\begin{lemma}\label{lem:frequency}
Assume the denominator of \eqref{eq:frozen-transfer} is Hurwitz, equivalently
\begin{align}\label{eq:local-routh}
&k_{1,p}>0,\quad 1+k_{3,p}>0,\quad k_{2,p}+k_{1,p}h>0,\nonumber\\
&\hspace{1em}
(1+k_{3,p})(k_{2,p}+k_{1,p}h)>\tau_pk_{1,p}.
\end{align}
Define
\begin{align}\label{eq:q-coefficients}
q_{0,p}&=k_{1,p}(k_{1,p}h^2+2hk_{2,p}-2),\nonumber\\
q_{1,p}&=1+2k_{3,p}-2\tau_p(k_{2,p}+k_{1,p}h).
\end{align}
Then $\norm{H_p}_\infty\le1$ if and only if
\begin{equation}\label{eq:frequency-test}
\begin{cases}
q_{0,p}\ge0, & q_{1,p}\ge0,\\
4\tau_p^2q_{0,p}\ge q_{1,p}^2, & q_{1,p}<0.
\end{cases}
\end{equation}
\end{lemma}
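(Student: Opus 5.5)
The plan is to reduce $\norm{H_p}_\infty\le1$ to nonnegativity of a real polynomial in $x=\omega^2$ on $[0,\infty)$, and then read off the stated conditions by elementary analysis of a quadratic. First, \eqref{eq:local-routh} is exactly the Routh--Hurwitz test for the cubic denominator of \eqref{eq:frozen-transfer}, as in the proof of Lemma~\ref{lem:frozen-plant}, with coefficients $a_2=(1+k_{3,p})/\tau_p$, $a_1=(k_{2,p}+k_{1,p}h)/\tau_p$, $a_0=k_{1,p}/\tau_p$. Hence $H_p$ is stable and proper, so $\norm{H_p}_\infty=\sup_{\omega\in\R}|H_p(j\omega)|$. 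Since $|H_p(j\omega)|$ is even in $\omega$, the bound $\norm{H_p}_\infty\le1$ is equivalent to
\[
|D_p(j\omega)|^2-|N_p(j\omega)|^2\ge0\qquad\text{for all }\omega\ge0,
\]
where $N_p$ and $D_p$ denote the numerator and denominator of \eqref{eq:frozen-transfer}. This step uses that $D_p(j\omega)\neq0$, which holds by the Hurwitz property.

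Next I would compute both squared moduli explicitly, separating real and imaginary parts:
\[
N_p(j\omega)=(k_{1,p}-k_{3,p}\omega^2)+jk_{2,p}\omega,
\]
\[
D_p(j\omega)=\bigl(k_{1,p}-(1+k_{3,p})\omega^2\bigr)+j\omega\bigl(k_{2,p}+k_{1,p}h-\tau_p\omega^2\bigr).
\]
Subtracting the real-part squares gives $-2k_{1,p}x+(1+2k_{3,p})x^2$. Subtracting the imaginary-part squares gives
\[
x\bigl[k_{1,p}^2h^2+2hk_{1,p}k_{2,p}-2\tau_p(k_{2,p}+k_{1,p}h)x+\tau_p^2x^2\bigr].
\]
Collecting terms should yield the factorization
\[
|D_p|^2-|N_p|^2=x\,f_p(x),\qquad f_p(x):=\tau_p^2x^2+q_{1,p}x+q_{0,p},
\]
with $q_{0,p},q_{1,p}$ exactly as in \eqref{eq:q-coefficients}. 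This algebraic identity is the crux, and it is the only place where errors can creep in. I would double-check it by verifying that the $x^1$ coefficient $k_{1,p}^2h^2+2hk_{1,p}k_{2,p}-2k_{1,p}$ equals $q_{0,p}$.

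The condition therefore becomes $f_p(x)\ge0$ for all $x>0$, which is equivalent, by continuity, to $f_p(x)\ge0$ for all $x\ge0$. Since $\tau_p^2>0$, $f_p$ is an upward parabola with vertex at $x^\star=-q_{1,p}/(2\tau_p^2)$.

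If $q_{1,p}\ge0$, then $x^\star\le0$, so $f_p$ is nondecreasing on $[0,\infty)$ and its minimum there is $f_p(0)=q_{0,p}$. Nonnegativity is then equivalent to $q_{0,p}\ge0$.

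If $q_{1,p}<0$, then $x^\star>0$ and the minimum over $[0,\infty)$ is
\[
f_p(x^\star)=q_{0,p}-\frac{q_{1,p}^2}{4\tau_p^2}.
\]
Nonnegativity is then equivalent to $4\tau_p^2q_{0,p}\ge q_{1,p}^2$, which automatically forces $q_{0,p}\ge0$.

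Both directions of the ``if and only if'' follow because each of the reductions above is an equivalence, and this gives \eqref{eq:frequency-test}.
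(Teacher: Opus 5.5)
Your proposal is correct and follows the paper's proof essentially step for step. Both reduce $\norm{H_p}_\infty\le1$ to $|D_p(j\omega)|^2-|N_p(j\omega)|^2=\omega^2\bigl(\tau_p^2y^2+q_{1,p}y+q_{0,p}\bigr)\ge0$ with $y=\omega^2$, then split on the sign of $q_{1,p}$ to locate the minimum of the upward parabola on $[0,\infty)$. Your explicit real/imaginary-part expansion checks out, and together with the continuity remark at $\omega=0$ it simply fills in details the paper leaves implicit.
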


\begin{IEEEproof}
Since the denominator of \eqref{eq:frozen-transfer} is Hurwitz,
$\norm{H_p}_\infty\le1$ holds if and only if
$|N_p(j\omega)|^2\le|D_p(j\omega)|^2$ for all $\omega$, where $N_p$ and $D_p$
are the numerator and denominator of \eqref{eq:frozen-transfer}. Taking squared
magnitudes and collecting terms gives
\[
|D_p(j\omega)|^2-|N_p(j\omega)|^2
=\omega^2\bigl(\tau_p^2y^2+q_{1,p}y+q_{0,p}\bigr),\qquad y=\omega^2,
\]
with $q_{0,p},q_{1,p}$ as in \eqref{eq:q-coefficients}. The factor $\omega^2$ is
nonnegative, so the condition reduces to
$g(y):=\tau_p^2y^2+q_{1,p}y+q_{0,p}\ge0$ for all $y\ge0$.

Because $\tau_p^2>0$, $g$ is an upward parabola with vertex at
$y^\star=-q_{1,p}/(2\tau_p^2)$. If $q_{1,p}\ge0$, then $y^\star\le0$, so $g$ is
smallest at $y=0$ and $g\ge0$ on $[0,\infty)$ exactly when $g(0)=q_{0,p}\ge0$.
If $q_{1,p}<0$, then $y^\star>0$, so the smallest value is
$g(y^\star)=q_{0,p}-q_{1,p}^2/(4\tau_p^2)$, which is nonnegative exactly when
$4\tau_p^2q_{0,p}\ge q_{1,p}^2$. These are precisely the two cases in
\eqref{eq:frequency-test}. This completes the proof of the lemma.
\end{IEEEproof}

\begin{remark}
At $h=0$, $q_{0,p}=-2k_{1,p}<0$, so every stabilizing constant-offset
predecessor law amplifies some sufficiently low frequency. A positive headway
can make $q_{0,p}\ge0$ and thereby enable non-amplification. This explains the
mechanism by which the CTH term prevents low-frequency growth. The frozen test
\eqref{eq:frequency-test} is necessary and sufficient mode by mode; the common
LMIs \eqref{eq:common-stability}--\eqref{eq:lmi-residual} are stronger because
they certify the switched system without an additional dwell-time restriction.
\end{remark}

\section{Simulation Studies}\label{sec:sim}
\subsection{Numerical setup}
The simulations consider one leader and $N=7$ followers. The powertrain modes
have $\tau_p\in\{0.3,0.5,0.7\}$~s, the standstill distance is $5$~m, the
cruising speed is $20$~m/s, and the CTH is $h=1.2$~s; hence, the equilibrium
gap is $29$~m. The controller gains are mode dependent:
$k_1=[0.47,0.99,0.44]\T$, $k_2=[0.50,1.00,0.50]\T$, and
$k_3=[0.53,1.01,0.70]\T$, corresponding respectively to
$\tau_1=0.3$~s, $\tau_2=0.5$~s, and $\tau_3=0.7$~s. The vehicles synchronously
cycle through the three powertrain modes with a $2$~s dwell in each mode. Thus,
mode $p\in\{1,2,3\}$ is active on
$[6k+2(p-1),6k+2p)$, $k\in\mathbb N_0$. This schedule is selected only to
exercise all three modes; the common physical certificate used in
Theorems~\ref{thm:platoon} and~\ref{thm:string} permits arbitrary physical
switching.

The auxiliary system uses $n_z=m_z=3$ and two modes with diagonal
$A_{z,r}$, $B_{z,r}$, and $C_{z,r}$, chosen so that each $A_{z,r}$ is Hurwitz,
each pair $(A_{z,r},C_{z,r})$ is observable, and the two dc-gain matrices differ
with $\operatorname{rank}(G_1-G_2)=3$, as required by
Proposition~\ref{prop:pbh}. The mode-dependent observer gains
$L_r=\operatorname{col}(L_r^z,L_r^y,L_r^x)$ are likewise diagonal. All numerical
values are provided in the public repository.
All auxiliary systems and observers use the right-continuous cycle $\rho=1$
then $\rho=2$, with each mode dwell drawn from the admissible set
$\mathcal D_\ell=\{0.30,0.35,0.40\}$~s; the reported run realizes the $0.40$~s
dwell, so $\rho=1$ on $[0.8k,0.8k+0.4)$ and $\rho=2$ on $[0.8k+0.4,0.8k+0.8)$.
The maximum cycle period is $T_\rho=0.8$~s, and the observability horizon is
$2T_\rho=1.6$~s.

For the physical platoon, a common certificate $P_p^c=P^c$ is feasible with
$\lambda_c=0.25~\mathrm{s}^{-1}$ and $\mu_c=1$. Hence the threshold in
\eqref{eq:plant-adt} is zero and $\xi_c=0.125~\mathrm{s}^{-1}$, so no positive
ADT lower bound is imposed. The worst maximum-eigenvalue residual in
\eqref{eq:plant-cert} is $-2.67\times10^{-2}$, providing strict slack and
verifying the common physical certificate for arbitrary switching among the
three certified plant--controller mode pairs $(\tau_p,k_p)$. The complete
certificate and simulation parameters,
including the matrices, noise model, and initial conditions, are available in
the public repository.\footnote{\url{https://github.com/AlienEslami/Resilient-Switched-cth-Platoons-code}}

\subsection{All-link attacks, estimation, and resilient response}
All attacks start at $t_a=8$~s. To avoid an artificial inconsistency among
the corrupted position, velocity, and acceleration fields, the state-packet
attack is generated from a scalar false position trajectory $q(t)$ as
\begin{equation}\label{eq:sim-consistent-packet}
 x^a_{i,i-1}(t)=\ell_i[q(t),\dot q(t),\ddot q(t)]\T.
\end{equation}
The onset is smoothed by $s_\Delta(t)$, a fifth-order transition from zero to
one over the interval $[t_a,t_a+\Delta]$. Thus,
\eqref{eq:sim-consistent-packet} remains
kinematically consistent during activation as well as afterward. The
auxiliary-output injection is a separate communication-channel signal and is
not subject to vehicle kinematics. Two attacks are applied simultaneously
to all seven predecessor links.

\subsubsection{Low-frequency attack}
For the low-frequency case,
\begin{align}\label{eq:sim-low-frequency}
 q_\ell(t)&=20s_{12}(t)\sin\!\big(0.05(t-t_a)\big),\nonumber\\
 y^a_{i,i-1}&=\ell_i s_{12}(t)[-4,1.4,0.8]\T
 \sin\!\big(0.05(t-t_a)\big),
\end{align}
where $\ell_i=1+0.05(i-1)$, and
\eqref{eq:sim-consistent-packet} is evaluated analytically with $q=q_\ell$.
After the activation transient, the position amplitude is large, while the
steady sinusoidal velocity and acceleration amplitudes are only $1$~m/s and
$0.05$~m/s$^2$, respectively, before link scaling. The larger short-lived
derivative terms near $t_a$ in
Fig.~\ref{fig:low-frequency} are produced by the smooth activation itself.
The figure shows the result on link $(4,3)$.

\begin{figure}[!t]
\centering
\includegraphics[width=\columnwidth]{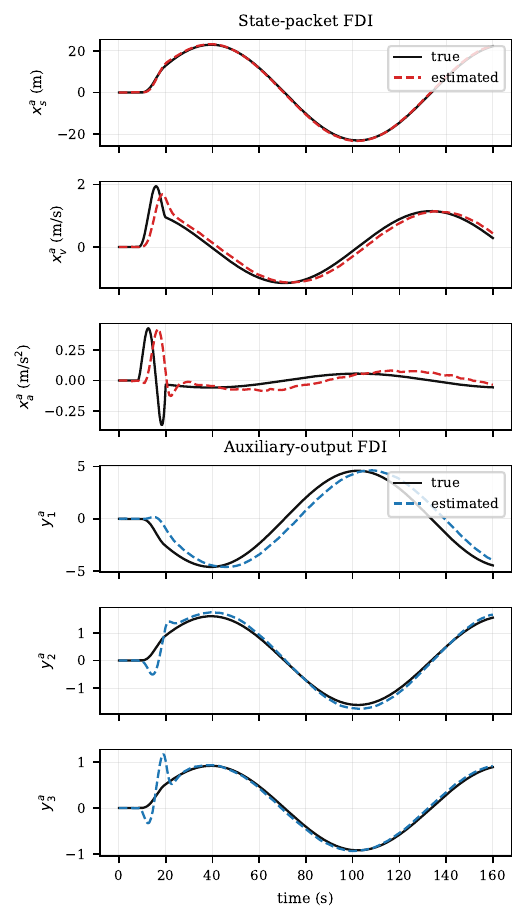}
\caption{Low-frequency all-link attack: injected and estimated signals on
representative link $(4,3)$.}
\label{fig:low-frequency}
\end{figure}

The uncompensated attack generates large oscillatory spacing and velocity
errors in Fig.~\ref{fig:response-low-frequency}. The proposed compensation
reduces these errors sharply and maintains positive distances despite the
persistent reconstruction lag. This is the expected rate-dependent behavior:
a slowly varying attack produces a smaller residual than a faster attack of
the same amplitude.

\begin{figure}[!t]
\centering
\includegraphics[width=\columnwidth]{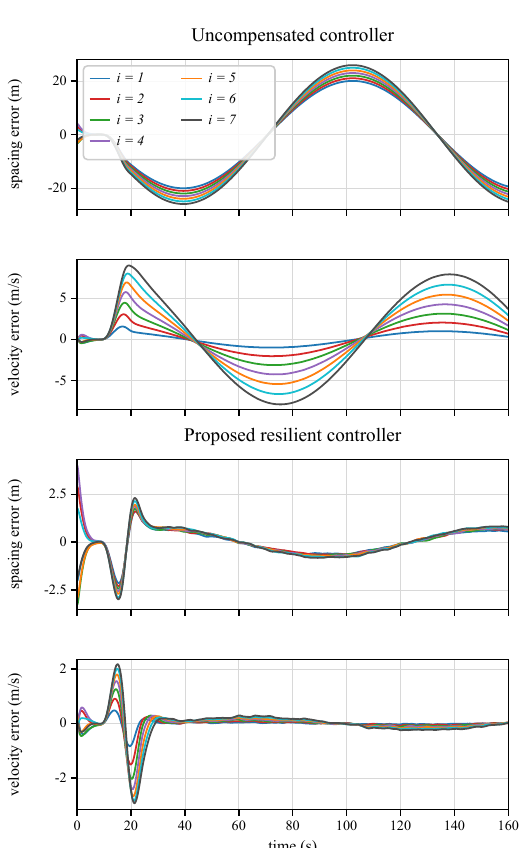}
\caption{Low-frequency all-link attack: uncompensated and resilient platoon
responses.}
\label{fig:response-low-frequency}
\end{figure}

\subsubsection{Ramp attack}
For the unbounded ramp attack,
\begin{equation}\label{eq:sim-ramp}
 q_r(t)=0.35s_2(t)(t-t_a),\qquad y^a_{i,i-1}=0,
\end{equation}
and \eqref{eq:sim-consistent-packet} is used with $q=q_r$.
This state-packet-only case isolates reconstruction of an unbounded attack:
after its smooth activation, the injected position grows linearly, while the
velocity and acceleration components are constant and zero, respectively.
Figure~\ref{fig:ramp-attack} therefore shows only the three state-packet
components. Their estimation errors remain bounded even though the position
attack itself is unbounded, consistently with the attack-rate dependence in
Theorem~\ref{thm:estimation}.

\begin{figure}[!t]
\centering
\includegraphics[width=0.96\columnwidth]{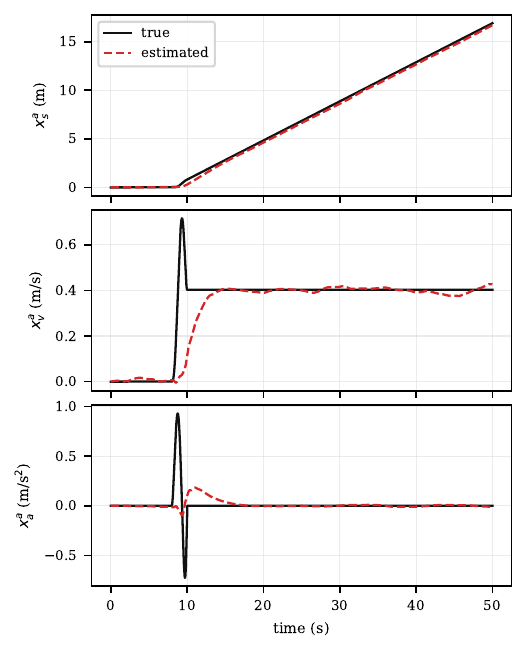}
\caption{Ramp all-link state-packet attack: injected and estimated signals on
representative link $(4,3)$.}
\label{fig:ramp-attack}
\end{figure}

For both time-varying attacks, the attack derivative, rather than its
amplitude, determines the persistent reconstruction error. Reporting each pair
as resilient/uncompensated, the final $\max_i\norm{e_i}$ is
$5.00/140.18$ for the low-frequency attack and $1.56/105.45$ for the ramp
attack, while the minimum intervehicle distance is $25.9$~m in both resilient
cases, against $3.6$~m and $13.3$~m without compensation. Thus compensation
keeps the intervehicle distances positive and reduces the final tracking error
by one to two orders of magnitude relative to the uncompensated controller.
The complete results, including the last-$5$~s RMS estimation errors over the
seven links, are provided in the public repository. These results demonstrate
resilience even when every controller-relevant predecessor link is
compromised, without an honest-majority condition.

\subsection{A single attacked link and downstream propagation}
To isolate downstream propagation, the low-frequency attack
\eqref{eq:sim-low-frequency} is next applied only to middle link $(4,3)$; all
other links remain attack-free. For each controller, the attack-induced
acceleration is
\[
 \Delta a_i(t):=a_i^{\rm attack}(t)-a_i^{\rm no\ attack}(t),
\]
where the reference trajectory uses the same switching signals, initial
condition, and communication-noise realization. Because the simulator is
linear under a fixed switching schedule, this subtraction isolates the
response caused by the attack without removing any attack-dependent dynamics.

Figure~\ref{fig:single-link-propagation} reports vehicles $4$--$7$. Over
$30\le t\le160$~s, the resilient normalized RMS values are
$[1,0.450,0.331,0.240]$, and the largest adjacent ratio is $0.736<1$.
Without compensation, the values are $[1,0.996,0.993,0.992]$, and the largest
adjacent ratio is $0.998$. The final largest tracking norms are $0.778$ and
$21.814$, respectively. Thus, the proposed method reduces the disturbance at
the directly affected follower and attenuates its downstream propagation.
This is a finite-horizon numerical observation under a nonzero estimation
error, rather than a claim that the strict nominal definition in
Definition~\ref{def:nominal-string} holds for every persistent attack.

\begin{figure}[t]
\centering
\includegraphics[width=\columnwidth]{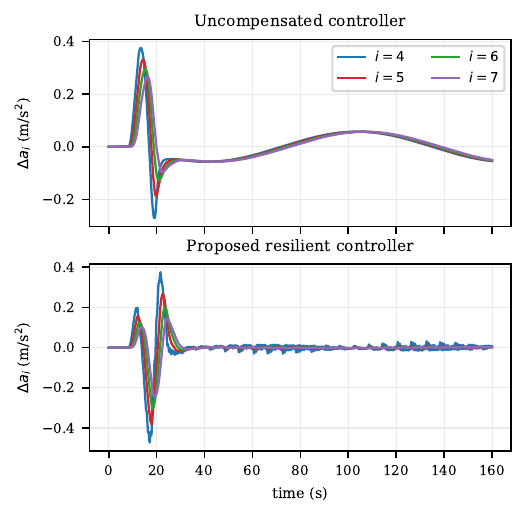}
\caption{Attack-induced acceleration under a low-frequency attack applied
only to link $(4,3)$. The bottom panels normalize by the RMS acceleration of
the directly affected vehicle~$4$.}
\label{fig:single-link-propagation}
\end{figure}

\subsection{Comparison with redundancy-based rejection}
The proposed per-link reconstruction is compared with the Mean Subsequence Reduced (MSR) platoon
controller of Zhao \emph{et al.}~\cite{Safeguard24}. To isolate the
attack-handling mechanism, the controlled comparison uses the same
seven-vehicle third-order plant, the fixed physical mode $p=2$, with
$\tau_2=0.5$~s and $k_2=[0.5,1,0.5]\T$, and the same three-predecessor
look-ahead graph for both methods; communication noise is omitted to isolate
attack cardinality. Thus, this comparison is not intended to assess performance
under powertrain switching; the switched-mode performance of the proposed
method is evaluated separately in the preceding experiments. The MSR controller removes the
neighbor-deviation vector farthest from the origin and averages the retained
vectors, whereas the proposed method estimates and compensates each link
separately. Every V2V link in the look-ahead graph is subjected to the same
smoothly activated constant attack. This violates the honest-neighbor
requirement of MSR filtering but remains admissible under
Assumption~\ref{as:attack}.

As shown in Fig.~\ref{fig:comparison}, the MSR rule has no honest packet to
retain once every incident link is compromised. Its steady largest tracking
error is $83.0$, and the minimum spacing falls to $9.4$~m. The proposed
method's corresponding values are $0.313$ and $25.83$~m. The purpose of this
comparison is not to claim that MSR fails within its stated threat model; it
shows the structural difference between an honest-majority method and the
present per-link reconstruction when all available links are attacked.

\begin{figure}[!t]
\centering
\includegraphics[width=\columnwidth]{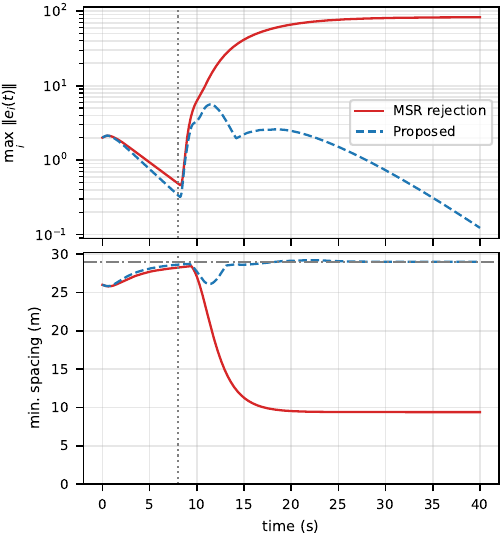}
\caption{All links attacked: MSR rejection~\cite{Safeguard24} versus the
proposed per-link reconstruction on the same redundant look-ahead graph.}
\label{fig:comparison}
\end{figure}
\vspace{-5pt}
\section{Conclusion}\label{sec:conclusion}
This paper developed a resilient CTH platoon architecture for switched
third-order vehicles under state-packet and auxiliary-output FDI attacks. A
defender-scheduled switching auxiliary system makes the augmented per-link system
uniformly observable, and the observer yields an estimation bound
that depends on attack rates rather than amplitudes. The resulting compensated
platoon is UUB under an explicit physical-switching condition, while the
predecessor-chain analysis separates nominal length-uniform acceleration
non-amplification from acceleration propagation under nonzero estimation
errors. The
exact frozen-mode test also isolates the headway mechanism that is absent from
constant-offset following. Numerical results with low-power communication
noise show bounded tracking under low-frequency and unbounded ramp attacks,
resilience when all
predecessor links are compromised, and downstream attenuation when only one
link is attacked.
The analysis adopts a common physical switching signal across the fleet, which
models a shared driving regime rather than vehicle-specific gear or actuation
changes; heterogeneous per-vehicle modes would turn the stacked closed loop
\eqref{eq:stacked-loop} into a genuinely time-varying interconnection and call
for a mode-mismatch string argument.

Future work will extend the analysis to asynchronous vehicle-specific
switching, communication delays and packet losses, actuator saturation, and
heterogeneous vehicle parameters. Further directions include stochastic-noise
robustness, invariant-set or control-barrier-function-based collision-avoidance
guarantees, and validation using higher-fidelity vehicle models and hardware experiments.
\bibliographystyle{IEEEtran}
\bibliography{references}
\end{document}